\newlength\myindent
\newtheorem{theorem}{Theorem}
\newtheorem{theorempart}{Theorem}[theorem]
\newtheorem{remark}{Remark}
\newtheorem{observation}[theorem]{Observation}
\newtheorem{lemma}{Lemma}
\def \RR   {{\mathbb R}}
\title{Load Balancing with Duration Predictions}
\author{
  Yossi Azar\thanks{Department of Computer Science,
  Tel Aviv University, Israel. \href{mailto:azar@tauex.tau.ac.il}{azar@tauex.tau.ac.il}.}
  \quad
  Niv Buchbinder\thanks{Department of Statistics and Operations Research, School of Mathematical Sciences,
  Tel Aviv University, Israel. \href{mailto:niv.buchbinder@gmail.com}{niv.buchbinder@gmail.com}.
  The work of Niv Buchbinder is supported in part by the Israel Science Foundation (ISF) grant no.\ 3001/24, 
  and the United States–Israel Binational Science Foundation (BSF) grant no.\ 2022418.}
  \quad
  Tomer Epshtein\thanks{Department of Computer Science,
  Tel Aviv University, Israel. \href{mailto:tomere1@mail.tau.ac.il}{tomere1@mail.tau.ac.il}.}
}
\date{}
\begin{document}

\maketitle

\thispagestyle{empty}
\pagenumbering{Alph}

\begin{abstract}

We study the classic fully dynamic load balancing problem on unrelated machines where jobs arrive and depart over time and the goal is minimizing the maximum load, or more generally the $\ell_p$-norm of the load vector. 
Previous work either studied the {\em clairvoyant} setting in which exact durations are known to the algorithm, or the {\em unknown duration} setting in which no information on the duration is given to the algorithm. For the clairvoyant setting algorithms with polylogarithmic competitive ratios were designed, while for the unknown duration setting strong lower bounds exist and only polynomial competitive factors are possible. 

We bridge this gap by studying a more realistic model in which some estimate/prediction of the duration is available to the algorithm. We observe that directly incorporating predictions into classical load balancing algorithms designed for the clairvoyant setting can lead to a notable decline in performance.
We design better algorithms whose performance depends smoothly on the accuracy of the available prediction. We also prove lower bounds on the competitiveness of algorithms that use such inaccurate predictions. 

\end{abstract}

\newpage
\pagenumbering{arabic}

\section{Introduction}

Load balancing is a fundamental problem studied extensively in both computer science and operations research. In the basic offline setting of scheduling permanent jobs on unrelated machines, we are given a set of $m$ machines and a sequence $\sigma$ of $n$ permanent jobs. The processing load of job $j$ on machine $i$ is $p_{ij}\in \RR_+ \cup \{\infty\}$ (in the restricted assignment model $p_{ij}=p_j$ for a subset of machines). The goal is to minimize the maximum load, or more generally the $\ell_p$-norm of the load vector of the machines. In the online setting jobs arrive one-by-one and should be assigned to a machine irrevocably upon arrival.

In a more realistic model jobs may arrive and also depart over time. In this model each job $j$ has an arrival time $t_j$ and a duration $d_j$, i.e. the job stays in the system starting at $t_j$ until $t_j + d_j$. Previous work can be partitioned to the {\em clairvoyant setting} in which the duration of each job is given to the algorithm upon arrival of the job, and the {\em non-clairvoyant} setting in which the duration is unknown to the algorithm. For the clairvoyant setting algorithms with polylogarithmic competitive ratios were designed. For example, \cite{AKPPW97} designed an $\bigO(\log m + \log D)$- competitive algorithm for the maximum load objective, where $D$ is ratio between the maximum and the minimum duration of a job. This result matches up to logarithmic factors the known lower bound of $\Omega\left(\log m + \sqrt{\frac{\log D}{\log \log D}}\right)$~\cite{ANR95,AAE03}.
The non-clairvoyant setting is much harder (in terms of competitive ratio). For unrelated machines a lower bound of $\Omega\left(\frac{m}{\log m}\right)$ was given in \cite{AAE03} and even for the restricted assignment model tight bounds of $\Theta(\sqrt{m})$ on the competitive ratio are known~\cite{ABK94,AKPPW97}.
More results on temporary jobs for both the $\ell_{\infty}$-norm and the $\ell_p$-norm appear in \cite{AE04, AEE06, MP97}.

On one hand, a polynomial in $m$ competitive factor for the non-clairvoyant setting is quite poor and undesirable. However, on the other hand, a fair criticism is that the clairvoyant model in which accurate durations of the jobs are known in advance is extremely unrealistic. As a compromise, a more realistic assumption is that some estimate/prediction of the duration of a job is available to the algorithm. Such estimates on the durations may be available via machine learning and statistical models (See, e.g., \cite{YN21}).
A natural question is whether we can design an algorithm that can achieve good competitive ratio when given such inaccurate estimates. This question has been explored in various scheduling problems with different objective functions~\cite{LLMV20}. For example, Azar, Leonardi and Touitou as well as Zhao, Li, Zomaya studied recently the problem of minimizing the flow time with uncertain processing times~\cite{ALT21,ALT22,ZLZ22}.
The general question of designing ``robust" algorithms when given predictions was studied extensively in recent years in countless other scenarios (See also Subection \ref{sec:related}). 

\subsection{Our Results}

We study a model in which jobs arrive online at time steps $t=0,1,2, \ldots,T-1$. Each job j has a real duration $d_j \in [1, D]$, but upon $j$'s arrival the algorithm is given only an estimation $1 \leq \Tilde{d_{j}} \leq \tilde{D}$ (may not be an integer) on its duration such that $d_j \in [\lceil \Tilde{d_{j}}/\mu_{2} \rceil, \lfloor \mu_{1} \cdot \Tilde{d_{j}} \rfloor]$. We refer to $\mu_1, \mu_2 \geq 1$ as the maximum underestimation and maximum overestimation factors of the durations and let $\mu \triangleq \mu_{1}\cdot \mu_{2}$ be the distortion parameter of the input\footnote{These parameters do not need to be known in advance.}. Let $\ell(t) \triangleq (\ell_{1}(t),\ldots,\ell_{m}(t))$ be the load vector at time $t$, where $\ell_i(t)$ is the load on machine $i$ at time $t$.
We study the problem of minimizing the maximum load
$\max_{t=1}^{T}\{\|\ell(t)\|_{\infty}\}$, where $\|\ell(t)\|_{\infty} = \max_{i=1}^{m} \ell_{i}(t)$, or ,more generally, minimizing $\max_{t=1}^{T}\{\|\ell(t)\|_{p}\}$, where   $\|\ell(t)\|_{p} = (\sum_{i=1}^{m} \ell_{i}(t)^{p})^{\frac{1}{p}}$ for $p\geq 1$\footnote{Without loss of generality, in our upper and lower bounds, we make the assumption that $1 \leq p \leq \log m$ as for any $m$-dimensional vector $x$, $\|x\|_{\infty} = \Theta(\|x\|_{\log m})$. Therefore, for any $p > \log m$, an algorithm can instead use an $\ell_{\log m}$-norm to approximate an $\ell_p$-norm objective up to a multiplicative constant.}.

An intuitive first attempt to handle such inaccurate estimates is to apply existing algorithms designed for the clairvoyant scenario using the estimates $\Tilde{d_{j}}$ instead of the actual durations $d_j$.
However, we show that for any fixed $\mu$ such a naive approach yields an $\Omega(\log \tilde{D} \cdot \log m)$-competitive algorithm. Our main result is a $\bigO(\log \tilde{D} + \log m)$-competitive algorithm improving ''quadartialy" the bound of the naive algorithm. Specifically:

\begin{theorem}
\label{thm:upper}
    For any $p \geq 1$ there exists an $\bigO\left(\mu (p+\log (\mu \tilde{D}))\right)$-competitive algorithm for load balancing of temporary tasks with predictions for the $\ell_p$-norm objective. In particular, for $\ell_{\infty}$, there exists an $\bigO\left(\mu\log (m\mu \tilde{D})\right)$-competitive algorithm.
\end{theorem}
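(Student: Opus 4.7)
The plan is to design an online algorithm whose competitive ratio cleanly decouples the uncertainty factor $\mu$ from the duration-dynamics factor $\log(\mu\tilde D)$ and the $\ell_p$-norm factor $p$, so that the three combine additively/multiplicatively as in the theorem rather than compounding into $\mu^2$ or $\mu\log(\mu\tilde{D})$.

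The algorithm itself will be an adaptation of the exponential-potential / greedy approach used in the clairvoyant setting. Specifically, I would maintain a \emph{virtual load} $\tilde{\ell}_i(t)$ in which every job $j$ is charged to its assigned machine over the interval $[t_j,\,t_j+\max(d_j,\tilde{d}_j)]$: the algorithm initially books $j$ with its predicted stay $\tilde{d}_j$, and if $j$ has not departed by time $t_j+\tilde{d}_j$, it is re-enlisted for another $\tilde{d}_j$ units, repeating until the actual departure. By construction $\ell_i(t)\le\tilde{\ell}_i(t)$ at all times. Upon arrival (or re-arrival) of job $j$, the algorithm places $j$ on the machine minimizing the marginal increase of a potential $\Phi(t)=\sum_i f(\tilde{\ell}_i(t)/L)$, where $L$ is a doubling-guess of OPT, and $f$ is an exponential $\beta^{x}$ with $\beta=\Theta(p+\log(\mu\tilde D))$, or the corresponding $p$-th power, stratified across the $O(\log\tilde D)$ predicted-duration classes $\tilde{d}_j\in[2^k,2^{k+1})$. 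This stratification mirrors the AKPPW argument and ensures that jobs within one duration class only interact with the potential over a time-window of comparable length.

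The analysis proceeds in three steps. (i) A standard potential argument shows that for each (re-)arrival, the chosen machine's marginal increase in $\Phi$ is at most what an $O(p+\log(\mu\tilde D))$-approximate offline assignment would have incurred, which by convexity and the duration-class decomposition yields $\max_t\|\tilde{\ell}(t)\|_p=O(p+\log(\mu\tilde D))\cdot\widetilde{\text{OPT}}$, where $\widetilde{\text{OPT}}$ is the optimum on the instance with true durations replaced by $\max(d_j,\tilde{d}_j)$. (ii) Since $\ell_i(t)\le\tilde{\ell}_i(t)$, the algorithm's real load is dominated by its virtual load. (iii) I would show that $\widetilde{\text{OPT}}\le O(\mu)\cdot\text{OPT}$: inflating each job's duration from $d_j$ to at most $\mu d_j$ blows up the maximum instantaneous load by at most $O(\mu)$, by slicing each inflated job into $\lceil\mu\rceil$ time-shifted copies of real duration $d_j$ and invoking the real optimum's feasibility on each slice. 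Chaining these three bounds gives $\|\ell(t)\|_p=O(\mu(p+\log(\mu\tilde D)))\cdot\text{OPT}$, and setting $p=\Theta(\log m)$ yields the stated $\ell_\infty$ bound $O(\mu\log(m\mu\tilde D))$.

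The main obstacle is justifying (iii) without incurring an extra logarithmic factor, and showing that the re-enlistment mechanism in (i) does not multiply the number of ``virtual arrivals'' by more than $\mu$, so that the accumulated potential increase is still controlled by $\widetilde{\text{OPT}}$ rather than some larger quantity; this is handled by observing that each real job contributes to $\tilde{\ell}$ over a single interval of length $\max(d_j,\tilde{d}_j)\le\mu_1\tilde{d}_j$ rather than being multiply counted within that interval. A secondary issue, that $\mu$ and the scale $L$ are unknown a priori, is handled by the usual doubling on $L$ at only a constant-factor loss, so that the final ratio remains $O(\mu(p+\log(\mu\tilde D)))$ without any dependence on advance knowledge of $\mu$.
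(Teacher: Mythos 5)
Your algorithm (virtual ``pseudo'' loads plus a greedy exponential/$p$-power potential) is close in spirit to the paper's, but the analysis has a genuine gap at step (iii). The claim $\widetilde{\mathrm{OPT}}\le O(\mu)\cdot \mathrm{OPT}$ is false: inflating every duration by a factor up to $\mu$ can increase the optimal maximum load by $\Theta(\mu\log\tilde D)$, not $O(\mu)$. This is exactly the paper's ``Price of Estimation'' lower bound: take jobs restricted to a single machine with arrival times $t_1=0$, $t_{j+1}=t_j+\frac{D-t_j}{\mu}$ and true durations $d_j=t_{j+1}-t_j$, and predictions $\tilde d_j=\mu d_j$ (so $\mu_1=1$, $\mu_2=\mu$, and your virtual durations $\max(d_j,\tilde d_j)=\mu d_j$). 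In the true instance at most one job is alive at any time, so $\mathrm{OPT}=1$, while in the inflated instance all $j^*=\Theta(\mu\log D)$ jobs are simultaneously alive just before time $D$ and all must sit on the same machine, so $\widetilde{\mathrm{OPT}}=\Theta(\mu\log D)$. Your slicing argument breaks precisely here: cutting an inflated job into $\lceil\mu\rceil$ time-shifted copies of length $d_j$ produces slices whose intervals are \emph{not} intervals of real jobs, so the slice covering a fixed time $t$ cannot be charged to $\mathrm{OPT}$'s load at $t$; the slices alive at $t$ correspond to original jobs alive at many distinct earlier times, and aggregating over those times is what costs the extra $\log\tilde D$. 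Consequently your chain $\|\ell(t)\|_p\le\|\tilde\ell(t)\|_p\le O(p+\log(\mu\tilde D))\widetilde{\mathrm{OPT}}\le O(\mu(p+\log(\mu\tilde D)))\mathrm{OPT}$ only yields $O\bigl(\mu\log\tilde D\,(p+\log(\mu\tilde D))\bigr)$, i.e.\ essentially the naive black-box bound that the paper explicitly sets out to beat.

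The paper avoids this by never comparing against an optimum for inflated durations. In its analysis, the marginal potential increase of the greedy choice, summed over the inflated window $[t_j+1,\,t_j+\lfloor\mu_1\tilde d_j\rfloor]$, is first bounded by the increase OPT's machine would incur on that same window, and then compressed using the key monotonicity fact: at the moment job $j$ arrives, the snapshot pseudo loads satisfy $\tilde\ell_{i,j-1}(t)\ge\tilde\ell_{i,j-1}(t+1)$, so the sum over the $\lfloor\mu_1\tilde d_j\rfloor$ future slots is at most $\mu$ times the sum over the first $\lceil\mu_1\tilde d_j/\mu\rceil\le d_j$ slots. This charges the algorithm directly to the true optimum's load over the job's \emph{real} lifetime $[t_j+1,t_j+d_j]$, paying the factor $\mu$ exactly once and keeping it separate from the $(p+\log(\mu\tilde D))$ factor coming from the potential. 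To repair your proof you would need to replace step (iii) by such a direct comparison (or some other mechanism that bypasses $\widetilde{\mathrm{OPT}}$); as written, the decomposition through $\widetilde{\mathrm{OPT}}$ cannot give the claimed bound.
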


We complement the upper bound by a lower bound, showing that the logarithmic factors and a polynomial dependency on $\mu$ are indeed necessary.

\begin{theorem}
\label{thm:lower}
    Let $p\in [1,\log m]$, $\mu\leq m$, $\tilde{D}\leq {m}^{m}$. Then, any deterministic online algorithm for the load balancing of temporary tasks has a competitive ratio of at least \\
    $\Omega\left(\mu^{\frac{1}{2} - \theta(\frac{1}{p})} + p+(\frac{\log \tilde{D}}{\log \log \tilde{D}})^{\frac{1}{2}-\theta(\frac{1}{p})}\right)$, where $\frac{1}{2}-\theta(\frac{1}{p})=\frac{p}{2p-1}-\frac{1}{p}$.
    In particular, for $\ell_{\infty}$, the competitive ratio is at least
    \[
    \Omega\left(\sqrt{\mu}+\log m+\sqrt{\frac{\log \tilde{D}}{\log \log \tilde{D}}} \right). 
    \]
    This lower bounds holds even for the restricted assignment model.
\end{theorem}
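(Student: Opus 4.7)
The plan is to prove the lower bound as the maximum of three separate adversarial constructions, one for each additive summand. Since the competitive ratio must dominate each summand simultaneously, the $\Omega$-sum follows up to constants.

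First, the $\Omega(p)$ summand is the classical lower bound for online load balancing of permanent jobs under the $\ell_p$ norm, obtained by the standard doubling construction on identical machines; it applies here since permanent jobs with $\tilde{d}_j=d_j=1$ are a valid special case of the prediction model. Second, the $(\log\tilde{D}/\log\log\tilde{D})^{1/2-\theta(1/p)}$ summand follows by instantiating with perfect predictions ($\mu=1$, $d_j=\tilde{d}_j$). Under this instantiation the model collapses to the clairvoyant temporary-jobs setting of Azar--Azar--Epstein, and their lower bound applies verbatim with true duration range $D=\tilde{D}$. Both of these are obtained essentially for free from existing bounds.

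The main work is the novel $\Omega(\mu^{1/2-\theta(1/p)})$ summand. The idea is to mimic the $\Omega(\sqrt{m})$ restricted-assignment lower bound for the non-clairvoyant setting (Azar--Broder--Karlin and Azar--Kalyanasundaram--Plotkin--Pruhs--Waarts), but with the adversary's duration freedom capped at $\mu$ rather than being unbounded. All jobs are announced with $\tilde{d}_j=1$, which leaves the adversary free to set each actual duration anywhere in $[1,\mu_1]$ and to terminate past jobs early via the $\mu_2$ factor. The construction proceeds in $\Theta(\log\mu)$ phases on a restricted-assignment instance whose machines are organized in pairs (small groups in the $\ell_p$ case): at each phase, unit-predicted jobs arrive that may go only to the two (few) candidate machines of their group; the online algorithm partitions them, and the adversary then declares the jobs on the more loaded side of each pair to have duration proportional to the number of remaining phases, while the others expire immediately. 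This forces a geometric concentration of load, ultimately leaving $\sqrt{\mu}$ simultaneously alive unit jobs on some machine, whereas the offline schedule that retrospectively always picks the lighter side maintains load $O(1)$. For the $\ell_p$ refinement we scale the group size and duration multipliers with $p$ and track a standard $\ell_p$-potential argument, which yields the refined exponent $\tfrac{p}{2p-1}-\tfrac{1}{p}$.

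The principal obstacle is the bookkeeping inside this last construction. Three things must be checked together: (i) the adversary's duration choices must remain consistent with every prediction announced so far, i.e.\ each $d_j\in[\lceil\tilde{d}_j/\mu_2\rceil,\lfloor\mu_1\tilde{d}_j\rfloor]$, across all phases; (ii) an offline schedule of maximum load $O(1)$ (respectively $O(1)^{1/p}$ in $\ell_p$) must be exhibited witnessing the gap; and (iii) the phase parameters must be tuned to recover exactly the exponent $\tfrac12-\theta(1/p)$. Point (iii) is the most delicate, since the $\ell_p$ potential must interpolate smoothly from a constant at $p=1$---where the total work is fixed and no $\mu$-dependent bound is possible---to $\sqrt{\mu}$ at $p=\infty$, which requires choosing the group size and the per-phase duration ratio as explicit functions of $p$ so that the incremental $\ell_p$-mass contributed by each phase matches the target exponent.
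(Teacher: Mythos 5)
Your three-way decomposition and the treatment of the $\Omega(p)$ term (simulate the permanent-jobs $\ell_p$ lower bound with exact predictions) match the paper. The genuine gap is in the centerpiece, the $\Omega\bigl(\mu^{\frac{p}{2p-1}-\frac{1}{p}}\bigr)$ construction. Your instinct to cap the non-clairvoyant adversary's duration freedom at $\mu$ is exactly the right one, but the skeleton you describe -- machines in pairs, $\Theta(\log\mu)$ halving phases, the adversary prolonging the jobs on the heavier side of each pair -- is the skeleton of the $\Omega(\log m)$ permanent-jobs bound, not of the $\Omega(\sqrt{m})$ non-clairvoyant bound. In a pair-halving scheme each phase adds $O(1)$ load to the surviving machines, so $\Theta(\log\mu)$ phases can only force load $O(\log\mu)$ against an offline load of $O(1)$; no tuning of group sizes or potentials turns this into $\mu^{1/2-\theta(1/p)}$, and your own quantitative claims ($\log\mu$ phases yet $\sqrt{\mu}$ simultaneously alive jobs) do not cohere. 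What is missing is the dichotomy the paper uses: run $R=\min\{\mu,m/2\}$ unit-length rounds, each with a fresh dedicated machine and a fixed common pool of $k=R^{\frac{p}{2p-1}}$ machines; in round $r$ release up to $k$ jobs with $\tilde d_j=1$ restricted to the pool plus machine $k+r$. If the algorithm ever places all $k$ jobs on the dedicated machine it already has load $k$; otherwise the first job it places in the pool is declared long-lived (until time $R$, so $d_j\le R\le\mu$, consistent with the prediction), and the round ends. After $R$ rounds the pool carries $R$ long jobs on $k$ machines, so by convexity $\|\ell(R)\|_p\ge\bigl(k(R/k)^p\bigr)^{1/p}=k$, while OPT routes each long job to its round's dedicated machine and spreads the short ones, paying at most $(k+R)^{1/p}\le(2R)^{1/p}$; the ratio is $\Omega\bigl(R^{\frac{p}{2p-1}-\frac{1}{p}}\bigr)$.

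A secondary gap: for the $\bigl(\frac{\log\tilde D}{\log\log\tilde D}\bigr)^{\frac12-\theta(\frac1p)}$ term you invoke the known clairvoyant lower bound ``verbatim,'' but that bound is for the maximum load; it gives you the $\ell_\infty$ case only, not the general-$p$ exponent claimed in the theorem. The paper instead reuses the construction above with $\mu=1$ and nested, staggered departure times (job $j$ of a round departing at $\frac{x_r}{2}(1+\frac{j}{k})$, the next round recursing into the gap between the last two departures), with $R=\Theta\bigl(\frac{\log\tilde D}{\log\log\tilde D}\bigr)$ determined by $\tilde D=\sqrt{x}^{\,x-\sqrt{x}+1}$; you would need an argument of this kind, not a citation, to cover $p<\infty$.
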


We also study a useful concept that we refer to as the {\em Price of Estimation} (PoE). Let $(I,A)$ be an instance of the problem and an assignment of jobs to machines, and let $(I(\mu),A)$ be the same instance and assignment where each job $j$ has duration $d'_j=\mu \cdot d_j$. 
Let $\ell(I,A, t)$ and $\ell(I(\mu),A, t)$ be the load vector at time $t$ when using the job assignment $A$ for instance $I$ and $I(\mu)$ respectively. Then,
\begin{equation}
    PoE(\mu,p) = \sup_{(I,A)}\left\{\frac{\max_{t=1}^{T}\|\ell(I(\mu),A ,t)\|_p}{\max_{t=1}^{T}\|\ell(I, A, t)\|_p}\right\} \nonumber
\end{equation}
Intuitively, the price of estimation captures the maximum penalty in our objective function (of any instance and assignment)  due to a distortion of $\mu$ in the jobs' durations. In particular, if we choose $A$ to be the optimal assignment the price of estimation bounds the maximum increase in the optimal value that can happen due to such distortion. We prove the following.

\begin{theorem}
\label{thm:PoE}
 For any $p\geq 1$, the price of estimation $PoE(\mu,p) = \Theta(\mu \log \tilde{D})$. Specifically, 
\begin{itemize}
    \item For any instance $I$ and an assignment $A$, \\ $\max_{t=1}^{T}\|\ell(I(\mu),A ,t)\|_p= \bigO(\mu \log \tilde{D}) \cdot\max_{t=1}^{T}\|\ell(I, A,t)\|_p$.
    \item There exists an instance and an assignment $A$ for which $$\max_{t=1}^{T}\|\ell(I(\mu),A,t)\|_p = \Omega(\mu \log \tilde{D}) \cdot \max_{t=1}^{T} \|\ell(I,A,t)\|_p .$$
\end{itemize}
\end{theorem}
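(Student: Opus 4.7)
The plan is to prove the two directions of Theorem~\ref{thm:PoE} separately: the upper bound by bucketing the active jobs in $I(\mu)$ along the time axis and applying Minkowski's inequality, and the lower bound by an explicit single-machine construction with geometrically spaced duration scales. The logarithmic factor will track the number of distinct duration scales $k$ (where $d_j\in[2^k,2^{k+1})$), while the factor of $\mu$ will reflect the $\mu$-fold stretching of each scale.

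For the upper bound, fix $I$, $A$, and a target time $t^*$, and partition the jobs active at $t^*$ in $I(\mu)$ into pairs $(k,r)$, where $k\in\{0,\ldots,\lceil\log\tilde{D}\rceil\}$ is the duration class and $r\in\{0,\ldots,2\mu-1\}$ indexes the sub-window $W_{k,r}:=(t^*-(r+1)2^k,\,t^*-r\cdot 2^k]$ containing $t_j$; these sub-windows cover every possible start since $t_j>t^*-\mu d_j\ge t^*-\mu 2^{k+1}$. The key observation is that every class-$k$ job with $t_j\in W_{k,r}$ is simultaneously active at the single specific time $\tau_{k,r}:=t^*-r\cdot 2^k$ in $I$, because $t_j\le\tau_{k,r}$ while $t_j+d_j\ge t_j+2^k>\tau_{k,r}$. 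Writing $U_{k,r}$ for the machine-indexed vector of bucket-$(k,r)$ contributions to $\ell(I(\mu),A,t^*)$, we obtain $U_{k,r}(i)\le\ell_i(I,A,\tau_{k,r})$ componentwise, so $\|U_{k,r}\|_p\le\max_t\|\ell(I,A,t)\|_p$. Since $\ell(I(\mu),A,t^*)=\sum_{k,r}U_{k,r}$, Minkowski's inequality gives $\|\ell(I(\mu),A,t^*)\|_p\le 2\mu(\lceil\log\tilde{D}\rceil+1)\cdot\max_t\|\ell(I,A,t)\|_p$.

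For the lower bound I would use a single-machine instance, which suffices uniformly in $p$ since on one machine every $\ell_p$-norm reduces to the scalar load. Take $t^*=0$ and $K:=\lfloor\log\tilde{D}\rfloor$. For each $k=0,\ldots,K$, schedule unit-processing-time jobs of duration $2^k$ as follows: class $0$ has $\mu$ jobs at integer times $-\mu+1,\ldots,0$; for $k\ge 1$, class $k$ has $\lfloor\mu/2\rfloor$ jobs at $-(\mu-1)2^k,-(\mu-2)2^k,\ldots,-\lfloor\mu/2\rfloor\cdot 2^k$. In $I$ the jobs of each class are spaced $2^k$ apart (equal to their duration), so no intra-class overlap occurs; the geometric separation between the class windows (class $k$ for $k\ge 1$ lives in $(-\mu 2^k,-\mu 2^{k-1}]$) ensures that only two adjacent classes can be active at any single moment, giving $\max_t\ell(I,A,t)=O(1)$. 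In $I(\mu)$ the scaled durations become $\mu 2^k$, and since every scheduled job started within $\mu 2^k$ of $0$, all of them are simultaneously active at $t^*=0$, yielding load $\mu+K\lfloor\mu/2\rfloor=\Omega(\mu\log\tilde{D})$ and hence the claimed ratio.

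The hard part is the lower-bound construction. Any scheme that concentrates each class's $\mu$ jobs around $t^*$ itself already forces every class to contribute $1$ to the $I$-load at $t^*$, losing the logarithmic factor and giving only a $\Theta(\mu)$ gap between the two loads. The resolution is geometric scale separation: placing class $k$ into the strictly earlier window $(-\mu 2^k,-\mu 2^{k-1}]$ still leaves room for $\Omega(\mu)$ class-$k$ jobs that are individually active in $I$ for only $2^k$ steps (and hence interact with at most one neighboring class), while in $I(\mu)$ the $\mu$-fold stretching still carries every one of them up to $t^*=0$, piling all $K+1$ classes' $\Omega(\mu)$ contributions at the same instant.
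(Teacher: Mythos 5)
Your upper bound is correct and takes a genuinely different route from the paper. The paper first reduces to a single machine (its Observation on $PoE_1(\mu)=PoE(\mu,p)$) and then builds the recursive time points $t_{j+1}=t_j+\frac{D-t_j}{\mu}$, showing there are $j^*=\Theta(\mu\log D)$ of them and charging the $I(\mu)$-load at a fixed time to the $I$-load at these $j^*$ points plus $O(\mu)$ unit-length intervals. You instead bucket the jobs alive at $t^*$ by dyadic duration class $k$ and by $O(\mu)$ shifted windows of width $2^k$, observe that every job in bucket $(k,r)$ is alive in $I$ at the single time $\tau_{k,r}$, and sum over the $O(\mu\log\tilde D)$ buckets with the triangle inequality. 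This works directly for arbitrarily many machines and every $\ell_p$-norm via componentwise domination, so you do not need the single-machine reduction at all; apart from trivial boundary/rounding care (e.g.\ $\lceil 2\mu\rceil$ windows, and the half-open window versus the slot convention $t\in[t_j+1,t_j+d_j]$, which can make a job with $t_j=\tau_{k,r}$ alive only at $\tau_{k,r}+1$), the argument is sound and gives the same $O(\mu\log\tilde D)$ bound.

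The lower-bound construction, however, has a genuine gap for small $\mu$. Your separation claim that class $k$ ``lives in $(-\mu 2^k,-\mu 2^{k-1}]$'' is false as stated: the last class-$k$ job starts at $-\lfloor\mu/2\rfloor 2^k$ and has duration $2^k$, so it stays alive until $(1-\lfloor\mu/2\rfloor)2^k$, which is strictly to the right of $-\mu 2^{k-1}$ for every $\mu$. For $\mu\geq 4$ this overshoot of one duration is harmless (at most two or three classes overlap, so $\max_t\ell(I,A,t)=O(1)$ and you do get $\Omega(\mu\log\tilde D)$), but for $\mu\in[2,4)$ you have $\lfloor\mu/2\rfloor=1$, so every class $k\geq 1$ places a job at $-2^k$ with duration $2^k$ that remains alive until time $0$ in $I$; then $\max_t\ell(I,A,t)=\Theta(\log\tilde D)$ rather than $O(1)$, while the $I(\mu)$-load is only $O(\log\tilde D)$, and the ratio collapses to $O(1)$ instead of the required $\Omega(\mu\log\tilde D)=\Omega(\log\tilde D)$. (For $\mu\to 1$ the statement itself degenerates, and the paper implicitly assumes $\mu$ bounded away from $1$, so that regime is not the issue.) The gap is fixable: either keep only the class-$k$ jobs whose $I$-intervals end before $-\mu 2^{k-1}$ (one job per class already suffices when $\mu$ is a constant in $[2,4)$), or adopt the paper's spacing, where job $j$ is alive exactly on $[t_j,t_{j+1})$ with $t_{j+1}=t_j+\frac{D-t_j}{\mu}$, so that in $I$ exactly one job is alive at any time while in $I(\mu)$ all $j^*=\Theta(\mu\log D)$ jobs end at $D$ and stack up just before it; that construction covers all $\mu\geq 2$ uniformly.
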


Finally, we show that our upper bounds can be extended for a setting of the routing problem with of temporary jobs.
In the routing problem, we are given a graph $G=(V,E)$ and a sequence of incoming jobs. Each job $j$ is associated with $j=(s_j, t_j, \{p_{e,j}\}_{e \in E},\tilde{d_j})$, where $s_j$ is the source node, $t_j$ is the target node, $p_{e,j}\in\{\RR_+, \infty\}$ is the load added to  edge $e \in E$ if the job is routed through the edge, and $\tilde{d}_j$ is to the estimated duration of the job.
Upon the arrival of job $j$, the algorithm should establish a connection by choosing a path between $s_j$ and $t_j$ and allocating the necessary bandwidth along the selected path.
 
\subsection{Our Techniques}

Our algorithms build on ideas developed earlier for the clairvoyant case in which accurate durations are available to the algorithm ~\cite{AKPPW97}.
A first attempt to our problem settings may be to execute the same algorithm with the load vector based on the estimated durations.
However, we show that such a naive approach yields an $\Theta(\mu \log \tilde{D} \cdot \log m)$-competitive algorithm.

Our algorithms are based on the greedy approach, and the most relevant work is the algorithm designed in \cite{IKKP19} for the setting of online vector scheduling problem.
When adapted to our setting, this algorithm reduces to the following.
When a new job arrives, the algorithm assigns the job to a machine $i$ that minimizes $f(\ell(t,i))-f(\ell(t))$, where $\ell(t)$ is the load vector just before assignment, $\ell(t,i)$ is the load vector after hypothetically assigning the job to machine $i$, and $f$ is a carefully chosen function of the load vector.

Instead of using the estimates $\tilde{d_j}$ directly, we take a slightly different approach. Let $d'_j= \mu_1 \cdot \tilde{d_j}$ be the maximum overestimate of the job duration. Based on these overestimate values, we define a pseudo load vector $\tilde{\ell}$ on the machines. The pseudo load does not change, even if some jobs end earlier than $t_j+d'_j$. Our algorithm then executes the algorithm in \cite{IKKP19} with these pseudo loads.

In principle, one might expect, that our algorithm loses a multiplicative factor of $O(\mu\log \tilde{D})$ due to the price of estimation discussed earlier. Surprisingly, we show that this is not the case as long as we use both $d'$ and the pseudo loads. For our improved analysis, we utilize a simple but important fact that upon arrival of a job, the estimated future pseudo loads on the machines are monotonically decreasing. That is, as expected, the final vector $\ell$ (or $\tilde{\ell}$) may be arbitrary increasing or decreasing over time, meaning that $\ell_i(t)$ may be smaller or larger than $\ell_i(t+1)$ for a machine $i$. However, when considering the jobs known to the algorithm until time $t$ without future arriving jobs (i.e., taking a snapshot at time $t$) the load vector is monotonically decreasing over time. Our analysis crucially uses this simple fact to obtain our improved competitive ratio.

\subsection{Related Work}\label{sec:related} 

Online load balancing of permanent jobs (jobs that never depart) with the goal of minimizing makespan has been extensively studied~\cite{G69,KVV90,ANR95,BFKV95,KPT96,AAFPW97,AAPW01,GKS14,BEM22}. Graham initiated the study of the {\em identical machines} model \cite{G69} which was further explored in \cite{BFKV95,KPT96}.
Later, $O(\log m)$-competitive algorithms were designed for the restricted assignment model~\cite{ANR95} and the unrelated machines model~\cite{AAFPW97} along with a matching lower bound~\cite{ANR95}. The more general objective of minimizing the $\ell_p$-norm of the load vector was also explored~\cite{CW75,CC76,AAGKKV95}. For this more general objective Caragiannis \cite{C08} proved that the greedy algorithm is $\frac{1}{2^{1/p}-1} \approx 1.4427p$-competitive for unrelated machines and that no deterministic algorithm can achieve a competitiveness of $\frac{1}{2^{1/p}-1} - \epsilon$ for any $\epsilon > 0$.

A further generalization of the load balancing problem known as {\em vector scheduling} was introduced by Chekuri and Khanna~\cite{CK04}. In this problem each job is associated with a load vector of dimension $d$. The goal is to minimize the maximum machine load across all dimensions ($\ell_{\infty}$) or, more generally, minimize the maximum value of the $\ell_p$-norm of the loads of the machines on each dimension.
Meyerson, Roytman, and Tagiku \cite{MRT13} designed a $\bigO(\log m + \log d)$-competitive algorithm for the case of unrelated machines and $\ell_{\infty}$-norm. Later on, Im et al. \cite{IKKP19} designed a $\Theta(p + \log d)$-competitive algorithm for the $\ell_p$-norm case, and proved a matching lower bound on the competitive ratio. 

Recently, motivated by machine learning techniques, a setting in which the online algorithm is given an additional (possibly inaccurate) predictive data has been studied extensively. The goal is to design algorithms that have a good performance when given an accurate predictions, while performing reasonably well when predictions are inaccurate.
In the domain of the load balancing problem, current research focuses on predicting job processing times using machine learning and statistical models~\cite{YN21}. Online algorithms utilizing estimations for job duration have been explored in many settings \cite{LLMV20,ALT21,ALT22,ZLZ22}.

\section{Preliminaries}\label{sec:preliminaries}

We study the problem of assigning temporary tasks to unrelated machines with time predictions. In this problem we are given $m$ machines. Jobs arrive online at time steps $t=0,1,2, \ldots,T-1$. Each job $j$ is associated with an arrival time $t_j$, a duration $d_j\in \{1,2, \ldots, D\}$, and a processing load $p_{ij}\in\{\RR_+, \infty\}$ when it is assigned to machine $i$. At any time $t$ the jobs arrive in an arbitrary order.
Note that the well-studied permanent jobs is a special case when all the jobs arrive at time 0.
The $t$-th time slot corresponds to time $(t-1, t)$. In particular job $j$ which arrives at $t_j$ and departs at $t_j + d_j$ is alive at time slots $t_j+1,\ldots,t_j+d_j$.
We assume that upon the arrival of a job the algorithm is given a (possibly inaccurate) duration estimation $\tilde{d}_j$ for its duration. We define $\tilde{D}$ as the maximum estimation over all the jobs.
Let $\mu_1, \mu_2 \geq 1$ be maximum underestimation and maximum overestimation factors of the durations such that for all jobs $d_j \in [\lceil \frac{\Tilde{d_{j}}}{\mu_{2}} \rceil, \lfloor \mu_{1} \cdot \Tilde{d_{j}} \rfloor]$ and let $\mu \triangleq \mu_{1}\cdot \mu_{2}$ be the distortion parameter of the input. Note that that the parameters $\mu_1,\mu_2, D, \tilde{D}$ do not need to be known to the algorithm in advance.

Recall that $\ell_i(t)$ is the load on machine $i$ at time $t$.
Let $\ell(t) \triangleq (\ell_{1}(t),\ldots,\ell_{m}(t))$ be the load vector at time slot $t$. 
Let $y_{i,j}$ be an indicator that the online algorithm assigns job $j$ to machine $i$. The load of machine $i$ at time slot $t$ after the arrival of the $j$-th job is $\ell_{i,j}(t) \triangleq \sum_{k=1,\ldots, j,  t\in [t_k+1, t_k+d_k]}p_{i,k}\cdot y_{i,k}$. 
Correspondingly, we define $y^*_{i,j}$, $\ell^*_{i,j}(t)$, and $\ell^*(t)$ as the assignment and load vector of the optimal solution.
Finally, let $\tilde{\ell}_{i,j}(t) \triangleq \sum_{k=1,\ldots, j, t\in [t_k+1, t_k+\lfloor \mu_1 \cdot \Tilde{d_k} \rfloor]}p_{i,k}\cdot y_{i,k}$ to be the ''overestimate" load on machine $i$ at time $j$ after the $j$th job arrive ,and $\Tilde{\ell}(t) \triangleq (\Tilde{\ell}_{1}(t),\ldots,\Tilde{\ell}_{m}(t))$ be the ''overestimate" load vector.
Let $\|\ell(t)\|_{p} = (\sum_{i=1}^{m} \ell_{i}(t)^{p})^{\frac{1}{p}}$, and $\|\ell(t)\|_{\infty} = \max_{i=1}^{m} \ell_{i}(t)$. We study the objective of minimizing $\max_{t=1}^{T}\{\|\ell(t)\|_{p}\}$.

We also study a useful concept that we refer to as the {\em Price of Estimation} (PoE).
Let $(I,A)$ be an instance of the problem and an assignment of jobs to machines, and let $(I(\mu),A)$ be the same instance and assignment where each job $j$ has duration $d'_j=\mu \cdot d_j$.
Let $\ell(I,A, t)$ and $\ell(I(\mu),A, t)$ be the load vector at time $t$ when using the job assignment $A$ for instance $I$ and $I(\mu)$ respectively. Then,
\begin{equation}
    PoE(\mu,p) \triangleq \sup_{(I,A)}\left\{\frac{\max_{t=1}^{T}\|\ell(I(\mu),A ,t)\|_p}{\max_{t=1}^{T}\|\ell(I, A, t)\|_p}\right\} \nonumber
\end{equation}

\noindent For simplicity, for a single machine, we use the notation $PoE_1(\mu,p)$ instead of $PoE(\mu, p)$. However, both notations refer to the same quantity. We note that for any $p$:
\begin{equation}
    PoE_1(\mu) = PoE_1(\mu,p)  = \sup_{I}\left\{\frac{\max_{t=1}^{T}\ell(I(
    \mu),t)}{\max_{t=1}^{T}\ell(I,t)}\right\}\nonumber
\end{equation}
Where in this case $\ell(I, t)$ represents the load on the single machine at time $t$ induced by $I$.

For simplicity of notation we also define for any instance $I$ over $m$ machines and an assignment $A$,
\begin{align*}
    I_{i}(I, A) \triangleq \{j \in I | A(j) = i\}
\end{align*}
That is, $I_{i}(I, A)$ is the set of jobs in $I$ which are assigned to machine $i$ by $A$.
In addition, we define $\ell_i(I_i, t)$ to be the load on machine $i$ by time $t$ when the jobs in instance $I_i$ are assigned to machine $i$ (thus, no assignment is labeled in the definition and we only observe on jobs instance).
Using this notation, $\ell(I(\mu),A, t)=(\ell_1(I_1, t), \ell_2(I_2, t)\ldots, \ell_m(I_m, t))$ .
\setcounter{observation}{0}
\begin{observation}\label{POE=POE_1}
    For every $\mu, p \geq 1$, $PoE_1(\mu) = PoE(\mu, p)$.
\end{observation}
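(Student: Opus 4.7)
The plan is to prove the equality by two separate inequalities. The easy direction $PoE(\mu,p)\ge PoE_1(\mu)$ follows by embedding: take any single-machine instance $I'$ approaching the supremum defining $PoE_1(\mu)$, view it as an $m$-machine instance by assigning all jobs to one fixed machine, and observe that the load vector always has a single nonzero coordinate, so $\|\ell(\cdot)\|_p$ equals that coordinate. Hence the $m$-machine ratio coincides with the single-machine ratio.

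For the reverse direction $PoE(\mu,p)\le PoE_1(\mu)$, the naive machine-by-machine argument fails: applying $PoE_1(\mu)$ per machine gives $\ell_i(I_i(\mu),t^*)\le PoE_1(\mu)\cdot\max_s \ell_i(I_i,s)$, but summing the $p$-th powers produces $\sum_i \max_s \ell_i(I_i,s)^p$, which can exceed $\max_t\sum_i \ell_i(I_i,t)^p$ because the per-machine maximizers need not align in time. My plan is to linearize the $\ell_p$-norm via $\ell_p/\ell_q$ duality and so avoid this misalignment. Concretely, fix $(I,A)$ and let $t^*$ maximize $\|\ell(I(\mu),A,t)\|_p$; setting $1/p+1/q=1$, pick $y\in\RR^m_+$ with $\|y\|_q\le 1$ such that $\langle y,\ell(I(\mu),A,t^*)\rangle=\|\ell(I(\mu),A,t^*)\|_p$, which exists by the dual representation $\|x\|_p=\sup\{\langle y,x\rangle:y\ge 0,\,\|y\|_q\le 1\}$ on the nonnegative orthant.

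Using this $y$, I would construct a single-machine instance $\hat I$ containing, for each $j\in I$, a job with the same arrival time $t_j$ and duration $d_j$ as $j$ but with load $y_{A(j)}\cdot p_{A(j),j}$. Then the single-machine loads decompose linearly: $\hat\ell(\hat I,t)=\sum_i y_i\ell_i(I_i,t)$ and $\hat\ell(\hat I(\mu),t)=\sum_i y_i\ell_i(I_i(\mu),t)$. Two facts follow. First, $\hat\ell(\hat I(\mu),t^*)=\|\ell(I(\mu),A,t^*)\|_p$ by the choice of $y$. Second, for every $t$, H\"older gives $\hat\ell(\hat I,t)\le \|y\|_q\,\|\ell(I,A,t)\|_p\le \max_s\|\ell(I,A,s)\|_p$. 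Applying the definition of $PoE_1(\mu)$ to $\hat I$ then yields $\max_t\hat\ell(\hat I(\mu),t)\le PoE_1(\mu)\cdot\max_t\hat\ell(\hat I,t)$, and chaining these inequalities produces $\|\ell(I(\mu),A,t^*)\|_p\le PoE_1(\mu)\cdot\max_s\|\ell(I,A,s)\|_p$. The main obstacle, the misalignment of per-machine peak times, is dissolved by this single duality step.
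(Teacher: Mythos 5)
Your proof is correct, and it takes a genuinely different route from the paper's. The paper argues machine-by-machine at a \emph{fixed} time $t$: it invokes the definition of $PoE_1(\mu)$ to claim $\ell_i(I_i(I,A)(\mu),t)\le PoE_1(\mu)\cdot\ell_i(I_i(I,A),t)$ with the same $t$ on both sides, and then passes to norms coordinatewise. That reduction is shorter than yours, but the same-time inequality does not literally follow from the definition of $PoE_1(\mu)$ (which only compares maxima over time), and as stated it can fail, e.g.\ at a time $t$ where a machine's original jobs have all departed while their stretched copies are still alive, so the right-hand side is $0$ and the left-hand side is positive; the misalignment you flag is exactly what goes wrong if one tries to repair that step by replacing $\ell_i(I_i,t)$ with $\max_s\ell_i(I_i,s)$. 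Your dual/H\"older linearization --- fixing the peak time $t^*$ of the stretched instance, choosing a nonnegative $y$ with $\|y\|_q\le 1$ attaining $\|\ell(I(\mu),A,t^*)\|_p$, and collapsing the $m$ machines into a single machine with job loads $y_{A(j)}p_{A(j),j}$ --- sidesteps this entirely and uses $PoE_1(\mu)$ only as a black-box ratio of maxima, which is all the definition provides; it yields the exact equality for every $p\ge 1$ (with $p=1$ handled by $y=\mathbf{1}$, and the degenerate all-zero case checked separately). What the paper's approach buys is brevity and the intuition that the price of estimation is an intrinsically single-machine phenomenon; what yours buys is rigor at the only delicate point. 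A middle road closer to the paper's spirit would be to reuse the decomposition from its Section~5 upper bound, whose $O(\mu\log D)$ reference time points are independent of the machine, and sum vectorially over machines at those common times via Minkowski --- but that only gives $PoE(\mu,p)=O(PoE_1(\mu))$ up to constants, whereas your reduction gives the equality exactly.
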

\begin{proof}
    It is clear that $PoE_1(\mu) \leq PoE(\mu, p)$ since taking an instance of jobs $I$ with single machine is a unique instance for $PoE(\mu, p)$'s definition  (no assignment $A$ is required as there's only a single machine, mandating that all jobs in the instance must be assigned to it).

    Consider $I$ and $A$ as an instance and assignment of jobs over $m$ machines, respectively, and let $t \in [1, T]$ represent a specific time. According to the definition of $PoE_1(\mu)$, for each machine $1 \leq i \leq m$, the load induced on machine $i$ by time $t$ through $I_i(I, A)(\mu)$ is at most $PoE_1(\mu)$ times the load on machine $i$ induced by $I_i(I, A)$, expressed as:
    $\ell_i(I_i(I, A)(\mu), t) \leq PoE_1(\mu) \cdot \ell_i(I_i(I,A),t)$.
    Hence, by observing the load vector $\ell(I(\mu),A ,t)$ we conclude that,
    \begin{align*}
        \|\ell(I(\mu),A ,t)\|_p \leq PoE_1(\mu) \cdot \| \ell(I, A, t) \|_p \leq PoE_1(\mu) \cdot \max_{t=1}^{T} \| \ell(I, A, t) \|_p
    \end{align*}
    Since this is true for every $t$, we get, 
    \begin{equation}
    PoE(\mu,p) = \sup_{(I,A)}\left\{\frac{\max_{t=1}^{T}\|\ell(I(\mu),A ,t)\|_p}{\max_{t=1}^{T}\|\ell(I, A, t)\|_p}\right\} \leq PoE_1(\mu) \nonumber
    \end{equation}
    
\end{proof}

In this paper, we also tackle the routing problem, which involves a given graph $G=(V,E)$ and a sequence of incoming jobs. Jobs arrive online at time steps $t=0,1,2, \ldots,T-1$. Each job $j$ arrives at time $a_j$ and is associated with $j=(s_j, t_j, \{p_{e,j}\}_{e \in E},\tilde{d_j})$, where $s_j$ represents the source node of the job, $t_j$ is the target node, $p_{e,j}\in\{\RR_+, \infty\}$ denotes the load added to each edge $e \in E$ if the job is routed through that edge, and $\tilde{d_j}$ and $d_j\in \{1,2, \ldots, D\}$ correspond to the estimated and actual duration times of the job, respectively.
Upon the immediate arrival of job $j$, the algorithm establishes a connection by allocating the necessary bandwidth along a selected path between nodes $s_j$ and $t_j$. In other words, the load is increased on all edges within the chosen path.
When a connection concludes, which happens by time $a_j + d_j$, the allocated load is released. The allocation cost is defined as the maximum $\ell_p$-norm of the loads on all edges in the graph, calculated over all points in time.

Let $y_{e,j}$ be an indicator that the online algorithm assigns job $j$ to a route $W_j$ s.t $e \in W_j$. The load of edge $e$ at time slot $t$ after the arrival of the $j$-th job is $\ell_{e,j}(t) \triangleq \sum_{k=1,\ldots, j,  t\in [t_k+1, t_k+d_k]}p_{e,k}\cdot y_{e,k}$. Let $\ell(t) \triangleq (\ell_{1}(t),\ldots,\ell_{|E|}(t))$ be the load vector at time slot $t$. Correspondingly, we define $y^*_{e,j}$, $\ell^*_{e,j}(t)$, and $\ell^*(t)$ as the assignment and load vector of the optimal solution. 
Finally, let $\tilde{\ell}_{e,j}(t) \triangleq \sum_{k=1,\ldots, j, t\in [t_k+1, t_k+\lfloor \mu_1 \cdot \Tilde{d_k} \rfloor]}p_{e,k}\cdot y_{e,k}$ to be the ''overestimate" load on edge $e$ at time $j$ after the $j$th job arrives ,and $\Tilde{\ell}(t) \triangleq (\Tilde{\ell}_{1}(t),\ldots,\Tilde{\ell}_{|E|}(t))$ be the ''overestimate" load vector.
Let $\|\ell(t)\|_{p} = (\sum_{e \in E} \ell_{e}(t)^{p})^{\frac{1}{p}}$, and $\|\ell(t)\|_{\infty} = \max_{e \in E} \ell_{e}(t)$. We study the objective of minimizing $\max_{t=1}^{T}\{\|\ell(t)\|_{p}\}$.

\section{Upper Bounds}

In this Section we prove the Theorem \ref{thm:upper} that we repeat here for convenience.
\setcounter{theorem}{0}
\begin{theorem}
    For any $p \geq 1$ there exists an $\bigO\left(\mu (p+\log (\mu \tilde{D}))\right)$-competitive algorithm for load balancing of temporary tasks with predictions for the $\ell_p$-norm objective. In particular, for $\ell_{\infty}$, there exists an $\bigO\left(\mu\log (m\mu \tilde{D})\right)$-competitive algorithm, which we show in Section \ref{linf_norm_algorithm}.
\end{theorem}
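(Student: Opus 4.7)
The plan is to adapt the greedy vector scheduling algorithm of Im, Kulkarni, Kumar, and Panigrahi~\cite{IKKP19} so that it operates on the pseudo load vector $\tilde{\ell}$ induced by the overestimated durations $d'_j = \lfloor \mu_1 \tilde{d}_j \rfloor$ rather than on the true load $\ell$. Upon arrival of job $j$ at time $t_j$, the algorithm computes, for each candidate machine $i$, the increment induced in a convex potential $\Phi$ of the future pseudo loads over the window $[t_j+1,\, t_j + d'_j]$, and assigns $j$ to the machine that minimizes this increment. The potential is built from an $\ell_p$-functional on the machine load vector, summed over a geometric partition of the future time horizon into $O(\log(\mu \tilde{D}))$ duration scales; these scales will play the role of the ``dimensions'' in the reduction to vector scheduling. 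Since $d'_j \geq d_j$ always, we have $\ell(t) \leq \tilde{\ell}(t)$ coordinate-wise at every time $t$, so it suffices to bound the algorithm's maximum pseudo load in terms of the true optimum $\ell^*$.

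The core step is the standard potential argument: at each arrival, the greedy choice ensures that the potential increase under the algorithm's assignment is at most the increase under any alternative assignment, in particular the one chosen by OPT. Summing these inequalities and balancing the accumulated potential against the final pseudo load yields, following Im et al., a factor of $O(p + \log(\mu \tilde{D}))$ between the algorithm's pseudo load and an ``auxiliary pseudo-optimum'' --- namely, the load that OPT's assignment would produce if it were evaluated against the inflated durations $d'_j$ instead of $d_j$.

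The main obstacle, and the heart of the proof, is bounding this auxiliary pseudo-optimum against the true optimum without incurring the full Price of Estimation factor $O(\mu \log \tilde{D})$ from Theorem~\ref{thm:PoE}, which would yield a strictly weaker ratio. The resolution exploits the monotonicity property stressed in the techniques subsection: upon arrival of $j$ at $t_j$, the profile $t \mapsto \tilde{\ell}_i(t)$ for $t \geq t_j$, computed from only the jobs known so far, is non-increasing in $t$, because no future arrivals are yet on record and any currently live job can only expire. This lets us charge the potential contributions across the entire window $[t_j+1,\, t_j + d'_j]$ to the single snapshot at $t = t_j + 1$, so the comparison against OPT is made at a matching instant rather than averaged over a long overestimated window. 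Consequently the auxiliary pseudo-optimum is only a factor $O(\mu)$ above the true optimum --- the raw overestimation ratio $d'_j / d_j \leq \mu_1 \mu_2 = \mu$ --- rather than the $O(\mu \log \tilde{D})$ suggested by a black-box application of Theorem~\ref{thm:PoE}.

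Combining the two bounds gives the $O(\mu(p + \log(\mu \tilde{D})))$ competitive ratio for the $\ell_p$-norm. The $\ell_\infty$ case then follows from the standard reduction: since $\|x\|_\infty = \Theta(\|x\|_{\log m})$ for any $m$-dimensional vector $x$, instantiating the $\ell_p$ bound at $p = \Theta(\log m)$ yields an $O(\mu \log(m\mu\tilde{D}))$-competitive algorithm for $\ell_\infty$, which matches the second part of the statement.
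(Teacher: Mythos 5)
Your overall architecture --- run the greedy of \cite{IKKP19} on the pseudo loads $\tilde{\ell}$ induced by the inflated durations $d'_j=\lfloor \mu_1\tilde{d}_j\rfloor$, use $\ell(t)\le\tilde{\ell}(t)$ coordinate-wise, and invoke the monotonicity of snapshot pseudo loads to dodge the full price of estimation --- is the right one, but the two-step decomposition you propose has a genuine gap. You compare the algorithm's pseudo load to an ``auxiliary pseudo-optimum'' (OPT's assignment evaluated with durations $d'_j$) and then claim this pseudo-optimum is within $O(\mu)$ of the true optimum because $d'_j/d_j\le\mu$ for every job. That claim is false: it is precisely the Price of Estimation question, and the lower bound in Theorem \ref{thm:PoE} (already on a single machine, hence for a fixed assignment) gives instances where inflating every duration by a factor $\mu$ multiplies the maximum load by $\Omega(\mu\log\tilde{D})$, not $O(\mu)$. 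Monotonicity cannot rescue this step: it is a property of the algorithm's snapshot loads $\tilde{\ell}_{i,j-1}(t)$ (no future arrivals are counted, so the profile is non-increasing in $t$), and it says nothing about OPT's loads under inflated durations, which is exactly what your second step must control. Any route through such a pseudo-optimum is therefore blocked by the paper's own lower bound.

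The paper's proof never forms a pseudo-optimum. The factor $\mu$ is paid inside the per-job potential comparison: after greediness (inequality \eqref{ineq12}), monotonicity of $t\mapsto\tilde{\ell}_{i,j-1}(t)$ makes the increments non-increasing over the window, so the sum over the inflated window of length $\lfloor\mu_1\tilde{d}_j\rfloor$ is at most $\mu$ times the sum over its prefix of length $\lceil\mu_1\tilde{d}_j/\mu\rceil\le d_j$ (inequalities \eqref{ineq16}--\eqref{ineq13}); from that point on the comparison and the telescoping are against OPT with the \emph{true} durations, so only one factor of $\mu$ plus the $(p+\log\Tilde{T})$ coming from the potential's exponent appear. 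Note that ``charging the whole window to the single snapshot $t_j+1$'' is not the right move --- it either costs a factor of $d'_j$ or fails to reconstruct $\max_t\|\ell^*(t)\|_p$ in the telescoping; the compression must be from the window of length $d'_j$ to a prefix of length at most $d_j$. Two further inaccuracies: the ``dimensions'' in the reduction are the $\Tilde{T}=2\mu_1\tilde{D}$ individual time slots (whence the exponent $p+\log\Tilde{T}$ and the additive $\log(\mu\tilde{D})$), not $O(\log(\mu\tilde{D}))$ geometric scales --- collapsing a scale into one coordinate would again overcount OPT for the same price-of-estimation reason --- and the theorem also requires removing the assumptions that $\mu_1,\tilde{D}$ are known and that all arrivals occur by time $\mu_1\tilde{D}$ (doubling plus partitioning the timeline, as in Subsection \ref{remove-assumptions}), which your sketch does not address.
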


\subsection{General \texorpdfstring{$\ell_p$-norm}{}}\label{sec:lp_norm_upper}

In this Subsection, we design an online algorithm for the load balancing of temporary tasks with predictions in the $\ell_p$-norm, such that for every input series $\sigma$ with distortion $\mu$ it holds:
\[\max_{t}\|\ell(t)\|_p \leq \bigO\left(\mu (p+ \log (\mu \tilde{D}))\right) \cdot \max_{t}\|\ell^*(t)\|_p \]
First and foremost, our approach relies on several assumptions and employs a very simple algorithm that is built upon these assumptions. However, in Subsection \ref{remove-assumptions}, we will demonstrate how to eliminate each assumption while preserving the desired competitive ratio.
The assumptions are the following:
\begin{enumerate}\label{all-assumptions}
    \item\label{series_distortion_params} The parameters $\mu_{1}$ and $\tilde{D}$ are known to the algorithm in advance.
    \item\label{series_end_time} All jobs arrive in the time range 0 up to $\mu_1 \tilde{D}$.
\end{enumerate}

\paragraph*{The Load Balancing Algorithm}\label{lp_norm_algorithm} Let $\Tilde{T} \triangleq 2\mu_1 \tilde{D}$. Upon arrival of a new job $j$ the algorithm allocates the job to the machine $i$ such that:
\begin{align}
    i =  \arg \min_{i'} & \sum_{t=t_j + 1}^{t_j+\lfloor \mu_{1} \cdot \Tilde{d_{j}} \rfloor} \left[(\sum_{z=1}^{m} (\Tilde{\ell}_{z,j-1}(t) + p_{z,j}\cdot \mathds{1}_{z=i'})^{p})^{\frac{p+\log \Tilde{T}}{p}}  - (\sum_{z=1}^{m} \Tilde{\ell}_{z,j-1}(t)^{p})^{\frac{p+\log \Tilde{T}}{p}}\right] \nonumber
\end{align}

Note that this summation involves adding up $\lfloor \mu_{1} \cdot \Tilde{d_{j}} \rfloor$ terms. Each of these terms represents a difference between two values, where the second value is a constant (for all machines $i'$). We choose to retain this term as it has no impact on the algorithm's behavior and proves beneficial for analysis purposes. We also note that due to assumption \ref{series_end_time}, each job $j$ departs by time $t_{j} + d_{j} \leq t_j+\mu_1 \cdot \tilde{d_j} \leq \mu_1 \tilde{D} + \mu_1 \tilde{D} = \Tilde{T}$. Therefore, starting from time $\Tilde{T}$, there are no active jobs remaining.

We prove that the algorithm is $\bigO\left(\mu (p+\log (\mu \tilde{D}))\right)$-competitive for any $1 \leq p \leq \log m$.

\begin{proof} 
For a given $j$, let $\Tilde{L}_{j-1} =  (\sum_{i=1}^{m} \Tilde{\ell}_{i,j-1}(t)^{p})^{\frac{p+\log \Tilde{T}}{p}}$.
We have for each $j$:
{\allowdisplaybreaks
\begin{align}
    &\sum_{t=1}^{\Tilde{T}}\left[ (\sum_{i=1}^{m} \Tilde{\ell}_{i,j}(t)^{p})^{\frac{p+\log \Tilde{T}}{p}} - (\sum_{i=1}^{m} \Tilde{\ell}_{i,j-1}(t)^{p})^{\frac{p+\log \Tilde{T}}{p}} \right]\nonumber\\
     &= \sum_{t=t_j + 1}^{t_j+\lfloor \mu_{1} \cdot \Tilde{d_{j}} \rfloor} \left[(\sum_{i=1}^{m} \Tilde{\ell}_{i,j}(t)^{p})^{\frac{p+\log \Tilde{T}}{p}} -  \Tilde{L}_{j-1} \right]\nonumber\\
    &= \sum_{t=t_j + 1}^{t_j+\lfloor \mu_{1} \cdot \Tilde{d_{j}} \rfloor} \left[(\sum_{i=1}^{m} (\Tilde{\ell}_{i,j-1}(t) + p_{i,j}\cdot y_{i,j})^{p})^{\frac{p+\log \Tilde{T}}{p}}  - \Tilde{L}_{j-1}\right] \label{ineq11}\\
    &\leq \sum_{t=t_j + 1}^{t_j+\lfloor \mu_{1} \cdot \Tilde{d_{j}} \rfloor} \left[(\sum_{i=1}^{m} (\Tilde{\ell}_{i,j-1}(t) + p_{i,j}\cdot y^{*}_{i,j})^{p})^{\frac{p+\log \Tilde{T}}{p}} - \Tilde{L}_{j-1}\right] \label{ineq12}\\
    &\leq \mu \cdot \sum_{t=t_j + 1}^{t_j+ \lceil \frac{\mu_1 \cdot \tilde{d_j}}{\mu} \rceil } \left[(\sum_{i=1}^{m} (\Tilde{\ell}_{i,j-1}(t) + p_{i,j}\cdot y^{*}_{i,j})^{p})^{\frac{p+\log \Tilde{T}}{p}} - \Tilde{L}_{j-1}\right] \label{ineq16}\\
    &\leq \mu \cdot \sum_{t=t_j + 1}^{t_j+ d_{j}} \left[(\sum_{i=1}^{m} (\Tilde{\ell}_{i,j-1}(t) + p_{i,j}\cdot y^{*}_{i,j})^{p})^{\frac{p+\log \Tilde{T}}{p}} - \Tilde{L}_{j-1}\right]\label{ineq13}\\
    &\leq \mu \cdot \sum_{t=t_j + 1}^{t_j+ d_{j}} \left[(\sum_{i=1}^{m} (\Tilde{\ell}_{i}(t) + \ell^{*}_{i,j-1}(t) + p_{i,j}\cdot y^{*}_{i,j})^{p})^{\frac{p+\log \Tilde{T}}{p}}  - (\sum_{i=1}^{m} (\Tilde{\ell}_{i}(t) + \ell^{*}_{i,j-1}(t))^{p})^{\frac{p+\log \Tilde{T}}{p}}\right] \label{ineq14}\\
    &= \mu  \sum_{t=1}^{\Tilde{T}} \left[(\sum_{i=1}^{m} (\Tilde{\ell}_{i}(t) + \ell^{*}_{i,j}(t))^{p})^{\frac{p+\log \Tilde{T}}{p}}   - (\sum_{i=1}^{m} (\Tilde{\ell}_{i}(t) + \ell^{*}_{i,j-1}(t))^{p})^{\frac{p+\log \Tilde{T}}{p}}\right] \nonumber
\end{align}}

\noindent Equality \eqref{ineq11} follows by the definition of $y_{i,j}$. 
Inequality \eqref{ineq12} follows by the greediness of the algorithm.
Inequality \eqref{ineq16} follows since for any $i,j$, $\tilde{\ell}_{i,j}(t) \geq  \tilde{\ell}_{i,j}(t+1)$ and therefore $\mu$ times the sum of the first $\lceil \frac{\lfloor \mu_1 \cdot\tilde{d}_j \rfloor}{\mu} \rceil$ terms is at least as the sum of the $\lfloor \mu_1 \cdot\tilde{d}_j \rfloor$ terms (and $\frac{\lfloor \mu_1 \cdot\tilde{d}_j \rfloor}{\mu} \leq \frac{\mu_1 \cdot\tilde{d}_j}{\mu}$).
Inequality \eqref{ineq13} follows since $\lceil \frac{\mu_1 \cdot \tilde{d_j}}{\mu} \rceil = \lceil \frac{\tilde{d_j}}{\mu_{2}} \rceil \leq  d_{j}$.
Inequality \eqref{ineq14} follows since the function $f(x_{1}, x_{2}, \ldots, x_{m}) = (\sum_{i=1}^{m} (x_{i} + a_{i})^{w})^{z} - (\sum_{i=1}^{m} x_{i}^{w})^{z}$ is a non-decreasing function of $x_i$ for  $w \geq 1$, $z \geq 1$, $a_{i} \geq 0$ and for every $i$ it holds $\tilde{\ell}_{i,j-1}(t) \leq \Tilde{\ell}_{i}(t) \leq \Tilde{\ell}_{i}(t) + \ell^{*}_{i,j-1}(t)$. Summing the above inequality for all $j$ we get,
\begin{align*} 
& \sum_{t=1}^{\Tilde{T}} \|\Tilde{\ell}(t)\|_p^{p+\log \Tilde{T}} =\sum_{t=1}^{\Tilde{T}} (\sum_{i=1}^{m} \Tilde{\ell}_{i}(t)^{p})^{\frac{p+\log \Tilde{T}}{p}}  \leq  \mu \cdot \sum_{t=1}^{\Tilde{T}} \left[(\sum_{i=1}^{m} (\Tilde{\ell}_{i}(t) + \ell^{*}_{i}(t))^{p})^{\frac{p+\log \Tilde{T}}{p}} - (\sum_{i=1}^{m} (\Tilde{\ell}_{i}(t))^{p})^{\frac{p+\log \Tilde{T}}{p}}\right] \\
&= \mu \cdot \sum_{t=1}^{\Tilde{T}} \left[\| \tilde{\ell}(t) + \ell^{*}(t)\|_p^{p + \log \Tilde{T}} - \|\Tilde{\ell}(t)\|_p^{p+\log \Tilde{T}}\right]
\end{align*}
Therefore,
{\allowdisplaybreaks
\begin{align}
\lefteqn{
     (1+\mu) \cdot \sum_{t=1}^{\Tilde{T}} \|\Tilde{\ell}(t)\|_p^{p+\log \Tilde{T}}
     \leq \mu \cdot \sum_{t=1}^{\Tilde{T}} \| \tilde{\ell}(t) + \ell^{*}(t)\|_p^{p + \log \Tilde{T}} } \nonumber \\ 
     &\leq \mu \cdot \sum_{t=1}^{\Tilde{T}} \left[\|\Tilde{\ell}(t)\|_p + \|\ell^{*}(t)\|_p\right]^{p + \log \Tilde{T}} \label{ineq21}
     \\ &\leq \mu \cdot \sum_{t=1}^{\Tilde{T}} \left[(1+\frac{1}{4\mu(p+\log \Tilde{T})})^{p + \log \Tilde{T}} \|\Tilde{\ell}(t)\|_p^{p + \log \Tilde{T}} + (1+ 4\mu(p+\log \Tilde{T}))^{p+\log \Tilde{T}}\|\ell^{*}(t)\|_p^{p + \log \Tilde{T}}\right] \label{ineq22}
     \\ &\leq \mu \cdot \sum_{t=1}^{\Tilde{T}} \left[(1+\frac{1}{2\mu})\cdot \|\Tilde{\ell}(t)\|_p^{p + \log \Tilde{T}}  + (5\mu\cdot(p+\log \Tilde{T}))^{p+\log \Tilde{T}}\cdot \|\ell^{*}(t)\|_p^{p + \log \Tilde{T}}\right] \label{ineq23}
     \\ &= (\mu + \frac{1}{2})\cdot \sum_{t=1}^{\Tilde{T}} \|\Tilde{\ell}(t)\|_p^{p + \log \Tilde{T}} + (5\mu\cdot(p+\log \Tilde{T}))^{p+\log \Tilde{T}}\cdot \mu \cdot \sum_{t=1}^{\Tilde{T}} \|\ell^{*}(t)\|_p^{p + \log \Tilde{T}}\nonumber
\end{align}
}

Inequality \eqref{ineq21} follows by Minkowski's Inequality \cite{B05}.
Inequality \eqref{ineq22} follows as for every $x, y\geq 0, a>0$, and $w\geq 0$ it holds that $(x + y)^{w} \leq (1+a)^w\cdot x^{w} + \left(1+\frac{1}{a}\right)^{w}\cdot y^{w}$ \footnote{\label{comb-lemma}If $y \leq ax$ then, $(x + y)^{w} \leq (x+ax)^w = (1+a)^w\cdot x^w$. Otherwise, $y \geq ax$. In this case we have $(x + y)^{w} \leq (\frac{y}{a} +y)^w = (1+\frac{1}{a})^w \cdot y^w$. Since both terms are positive we conclude the desired.}, when $x=\|\Tilde{\ell}(t)\|_p$, $y=\|\ell^{*}(t)\|_p$, $w=p+\log \Tilde{T}$ and $a=\frac{1}{4\mu w}$ for each $1\leq t \leq \Tilde{T}$ separately.
Inequality \eqref{ineq23} follows since $(1+\frac{1}{4\mu w})^w\leq e^{\frac{1}{4\mu}}\leq 1+\frac{1}{2\mu}$ for $\mu \geq 1$ and $w \geq 1$, where the last inequality holds as  $e^x \leq 1 + 2x$, for  $x \in [0, \ln(2)]$.
Thus,
\begin{align*}
    & \frac{1}{2} \sum_{t=1}^{\Tilde{T}} \|\Tilde{\ell}(t)\|_p^{p + \log \Tilde{T}} 
    \leq (5\mu\cdot(p+\log \Tilde{T}))^{p+\log \Tilde{T}}\cdot \mu  \sum_{t=1}^{\Tilde{T}} \|\ell^{*}(t)\|_p^{p + \log \Tilde{T}}\\
    &\leq (5\mu\cdot(p+\log \Tilde{T}))^{p+\log \Tilde{T}}\cdot \mu \cdot \Tilde{T} \cdot (\max_{t=1}^{\Tilde{T}} \|\ell^*(t)\|_p)^{p+\log \Tilde{T}}
\end{align*}
Putting everything together, we have
\begin{align}
    & (\max_{t}\|\Tilde{\ell}(t)\|_p)^{p+\log \Tilde{T}}
    = \max_{t} \|\Tilde{\ell}(t)\|_p^{p+\log \Tilde{T}}
    \leq \sum_{t=1}^{\Tilde{T}} \|\Tilde{\ell}(t)\|_p^{p+\log \Tilde{T}}  \nonumber
    \\ &\leq 2 \cdot (5\mu\cdot(p+\log \Tilde{T}))^{p+\log \Tilde{T}}\cdot \mu \cdot \Tilde{T} \cdot (\max_{t=1}^{\Tilde{T}} \|\ell^*(t)\|_p)^{p+\log \Tilde{T}} \nonumber
\end{align}
Recall that $\Tilde{\ell}(t)$ is overestimate $\ell(t)$ and $\tilde{T} = 2 \mu_1 \Tilde{D}$.
By taking a power of $\frac{1}{p+ \log \Tilde{T}}$ and noting that $x^{\frac{1}{\log x}} = 2$, we have:
\begin{align}
    & \max_{t=1}^{\tilde{T}} \|\ell(t)\|_p 
    \leq \max_{t=1}^{\tilde{T}} \|\Tilde{\ell}(t)\|_p \nonumber \\
    & 
    \leq 2^{\frac{1}{p+\log \Tilde{T}}} \cdot (5\mu\cdot(p+\log \Tilde{T})) \cdot \mu^{\frac{1}{p+\log \Tilde{T}}} \cdot \Tilde{T}^{\frac{1}{p+\log \Tilde{T}}} \cdot \max_{t=1}^{\tilde{T}} \|\ell^*(t)\|_p \nonumber \\
    &\leq 2 \cdot (5\mu\cdot(p+\log \Tilde{T})) \cdot \mu_1^{\frac{1}{p+\log \Tilde{T}}} \cdot \mu_2^{\frac{1}{p+\log \Tilde{T}}} \cdot 2 \cdot \max_{t=1}^{\tilde{T}} \|\ell^*(t)\|_p \nonumber \\
    &\leq 20 \cdot \mu\cdot(p+\log \Tilde{T}) \cdot \mu_1^{\frac{1}{\log (2\Tilde{\mu_1})}} \cdot \mu_2^{\frac{1}{\log (2\Tilde{D})}} \cdot \max_{t=1}^{\tilde{T}} \|\ell^*(t)\|_p \nonumber \\
    &\leq C \cdot \mu (p+ \log (\mu \tilde{D})) \cdot \max_{t=1}^{\tilde{T}} \|\ell^*(t)\|_p \label{ineq111}\\
    & = \bigO\left(\mu (p+ \log (\mu \tilde{D}))\right) \cdot \max_{t=1}^{\tilde{T}} \|\ell^*(t)\|_p \nonumber
\end{align}
where the last inequality holds since $\mu_2 \leq \tilde{D}$, and for $C \leq 80$.

\end{proof}

\subsection{Removing the Assumptions}\label{remove-assumptions}

The Load Balancing Algorithm in Subsection \ref{lp_norm_algorithm} is based on two assumptions. In this Subsection, we will demonstrate how to eliminate each of these assumptions while maintaining the desired competitive ratio.
To do that, we partition the sequence $\sigma$ into sub-sequences $\sigma_i$.  
For any sub-sequence $\sigma_i$ of $\sigma$ and any online algorithm $ON$ applied to $\sigma_i$, we define $ON(\sigma_i, t)$ as the $\ell_p$-norm of the load vector $\ell^{\sigma_i}(t)$, which represents the loads at time $t$ considering only the jobs derived from $\sigma_i$, independent of other jobs in $\sigma$. Formally,  
$
ON(\sigma_i, t) = \| \ell^{\sigma_i}(t) \|_p
$.
Additionally, denote $ON(\sigma_i)$ as $\max_{t=1}^{T} ON(\sigma_i, t)$. Similar analogies apply to the optimal offline algorithm $OPT$.

\paragraph*{Eliminate assumption \ref{series_distortion_params}}\label{remove:assumption_3}
Our objective is to remove the assumption that the parameters, $\mu_{1}$ and $\tilde{D}$, are known to the algorithm in advance. Note that $\mu$ and $\mu_2$ are only used in the analysis and need not be known by the algorithm.

Our guarantee is that whenever our algorithm is given values $\mu^{'}_1\ge \mu_1$, $\mu^{'} \ge \mu$ and $D^{'} \ge \tilde{D}$
 then, it outputs a solution that is $\bigO\left(\mu^{'} (p+ \log (\mu^{'} \cdot D^{'}))\right)$-competitive. We apply standard (careful) doubling technique to guess these parameters.
 We modify the algorithm and create different copies of the algorithm which run independently on different jobs. The modification is given \hyperref[modification_target]{here}.
\begin{algorithm}[t]
\captionsetup{type=algorithm}
    \phantomsection
    \label{modification_target} 
    \caption*{Modification of the algorithm to remove assumption \ref{series_distortion_params}}
    \begin{algorithmic}[1]
        \Statex 1: $\mu_1^{1} \gets 1 $, {$\tilde{D}_{1} \gets 2^p$}, $i \gets 1$
        \Statex 2: Create a copy of the Load Balancing Algorithm (\ref{lp_norm_algorithm}) with the parameters $\mu_1^{1}$ and {$\tilde{D}_{1}$}.
        \Statex 3: \textbf{Upon a job arrival:}
            \Statex 4: \hspace{\algorithmicindent} Update the real parameters $\mu_1$ and $\tilde{D}$ according to {\bf current} and {\bf previous} jobs
            \Statex 5: \hspace{\algorithmicindent} \textbf{if} {$\mu_1 > \mu_1^{i}$ or {$\tilde{D} > \tilde{D}_{i}$}} \textbf{then}
            \Statex 6: \hspace{\algorithmicindent} \hspace{\algorithmicindent}  $\mu_1^{i+1} \gets 2 \mu_1$, $\tilde{D}_{i+1} \gets \tilde{D}^{2} \cdot \mu_1^{i+1}$
            \Statex 7: \hspace{\algorithmicindent} \hspace{\algorithmicindent} Create a new copy of the Load Balancing Algorithm (\ref{lp_norm_algorithm}) with the parameters $\mu_1^{i+1}$ and {$\tilde{D}_{i+1}$}

            \Statex 8: \hspace{\algorithmicindent} \hspace{\algorithmicindent} $i \gets i + 1$
            
            \Statex 9:  \hspace{\algorithmicindent} \text{Use the latest copy for the current job}
    \end{algorithmic}
\end{algorithm}

The modification of $\mu_1$
and $\tilde{D}$ in line 4 is done according to the the jobs until the $j$'th job. If $\Tilde{d}_j > \Tilde{D}$ then we update $\Tilde{D} = \Tilde{d}_j$.
In order to update $\mu_1$ in line 4 in the Algorithm above, we divide jobs $1,\ldots,j-1$ into two types of jobs.
The first type of jobs are those that have already departed, i.e., their actual durations are known. By considering these jobs, the algorithm can accurately estimate the distortion caused by its predictions. By comparing the predicted durations to the actual durations of these departed jobs, the algorithm updates $\mu_1$. 
The second type of jobs are the active jobs that have not yet departed. For these jobs, the algorithm is unable to determine their exact durations at the time of their arrival. However, it can update the underestimation factor $\mu_1$ based on the progress of these active jobs.
In the modification when we create a new copy of the algorithm, the new copy does not take into account all the previous jobs and the loads are 0 on all machines at that moment.

We now prove the competitive ratio remains at most $\bigO\left(\mu (p+ \log (\mu \tilde{D}))\right)$ even after the modification.
\begin{proof}
    Let $t$ be some time, and suppose until time $t$, $k$ copies were opened. Let $\sigma_i$ be the block of jobs which executed by the $i$'th copy of the algorithm, i.e. $\sigma = \sigma_1 \ldots \sigma_k$.
    We have:
    \begin{align*}
        ON(\sigma, t) 
        = \| \ell(t) \|_p = \| \sum_{i=1}^{k} \ell^{\sigma_i}(t) \|_p
        \leq \sum_{i=1}^{k} \| \ell^{\sigma_i}(t) \|_p 
        = \sum_{i=1}^{k} ON(\sigma_i, t) \leq \sum_{i=1}^{k} ON(\sigma_i)
    \end{align*}
    Let $C$ be the constant from the upper bound proof in inequality \eqref{ineq111}. Let $\mu^{i}= \mu_1^i \cdot \mu_2$ where $\mu_2$ is the real value that may only increase over time. For each copy independently:
    \begin{align*}
        ON(\sigma_i) 
        \leq C \cdot \mu^{i} \cdot (p + \log (\mu^{i} \cdot\tilde{D}_{i})) \cdot OPT(\sigma_i)
        &\leq C \cdot \mu^{i} \cdot (p + \log (\mu^{i} \cdot\tilde{D}_{i})) \cdot OPT(\sigma)\\
        &\leq 2C \cdot \mu^{i} \cdot \log (\mu^{i} \cdot\tilde{D}_{i}) \cdot OPT(\sigma),
    \end{align*}
    where the last inequality holds since the initial value $\tilde{D}_{1} \gets 2^p$ 
and this value only grows over time. Therefore, $\log (\mu^{i} \cdot\tilde{D}_{i}) \geq \log (\tilde{D}_{i}) \geq p$ for each $i$.
    We have:
    \begin{align*}
        ON(\sigma, t) \leq \sum_{i=1}^{k} ON(\sigma_i)
        &\leq  \sum_{i=1}^{k} 2C \cdot \mu^{i} \cdot \log (\mu^{i} \cdot\tilde{D}_{i}) \cdot OPT(\sigma)
        = 2C \cdot \sum_{i=1}^{k} \mu^{i} \cdot \log (\mu^{i} \cdot\tilde{D}_{i}) \cdot OPT(\sigma)
    \end{align*}
    Let $c_i = \mu^{i} \cdot \log (\mu^{i} \cdot\tilde{D}_{i})$.
    We now prove that $ \sum_{i=1}^{k} c_i = \bigO (\mu (p + \log (\mu \tilde{D})))$.
    
    We observe the $i+1$'th copy.
    By the definition of the \hyperref[modification_target]{modification}, the $i+1$'th copy was created by enter the if-statement on line 5.
    if $\mu_1 > \mu_1^{i}$, then  $\mu^{i+1} = \mu_1^{i+1} \mu_2 \geq 2\mu_1^{i} \mu_2 \geq 2 \mu^{i}$
      
 since $\mu_2$ is monotonic non-decreasing. Hence $c_{i+1} \geq 2 c_i$.
    Otherwise, we have $\tilde{D} > \tilde{D}_{i}$, hence:
    \begin{align*}
        c_{i+1} 
        \geq \mu^{i} \cdot \log (\mu^{i} \cdot\tilde{D}_{i+1})
        \geq \mu^{i} \cdot \log (\mu^{i} \cdot \tilde{D}_{i}^{2}\cdot\mu^{i}) = 2c_i
    \end{align*}
    In total we have $c_{i+1} \geq 2c_i$.
    By our update rule we have $\mu_1^{i} \leq 2 \mu_1$, 
    and $ \tilde{D}_{i} \leq  \max\{ 2^{p},\tilde{D}^{2} \cdot 2\mu_1\}
    = \max\{2^p, \tilde{D}^{2} \cdot 4\mu\}$, for each $i$, for the final real parameters $\mu_1$
    and $\tilde{D}$ of the series.
    Therefore, we conclude:
    \begin{align*}
        \sum_{i=1}^{k} c_i 
        \leq 2 \cdot c_k 
        = 2 \cdot \mu^{k} \cdot \log (\mu^{k} \cdot\tilde{D}_{k}) 
        &\leq 2 \cdot 4\mu \cdot \log(\max\{ 2^p, \tilde{D}^{2} \cdot 4\mu\} \cdot 4\mu) = \bigO (\mu (p + \log (\mu \tilde{D})))
    \end{align*}
    Therefore, for any $t$, we have $ON(\sigma, t) \leq \bigO (\mu (p + \log (\mu \tilde{D}))) \cdot OPT(\sigma)$ which concludes the proof.
\end{proof}

\paragraph*{Eliminate assumption \ref{series_end_time}}\label{series_end_time_removal}
Our objective is to remove the assumption that all jobs arrive in the time range 0 up to $\mu_1 \tilde{D}$.

To accomplish this, we assume, without loss of generality, that the arrival time of the first job is $t_{1} = 0$ and we divide the jobs into groups based on their arrival times. Let $\Tilde{T} \triangleq 2\mu_1 \tilde{D}$. Each group has an interval length of $\frac{\Tilde{T}}{2}$, and group $k$ consists of all jobs that arrive within the time interval $[(k-1)\frac{\Tilde{T}}{2}, k\frac{\Tilde{T}}{2})$. We denote group $k$ by $\sigma_{k}$ and $\sigma = \sigma_{1},\ldots,\sigma_{k}$. It is established that for job $j$ departs by time $t_{j} + d_{j} \leq t_{j} + \mu_{1} \tilde{D} 
= t_{j} + \frac{\Tilde{T}}{2}$. We conclude that jobs which belong to group $k$ must depart by time $(k+1)\frac{\Tilde{T}}{2}$.

To address the assumption, we employ a distinct instance of the Load Balancing Algorithm (\ref{lp_norm_algorithm}) for each group.
This means that jobs within a specific group are assigned independently of the task assignments in other groups, as shown in figure \ref{fig:assumption_1_removal}.
It is important to note that for each group $k$ it holds that,
\begin{align*}
    ON(\sigma_{k}) \leq C \cdot (\mu (p+ \log \Tilde{T}) \cdot OPT(\sigma_{k}) \leq C \cdot (\mu (p+ \log \Tilde{T}) \cdot OPT(\sigma)
\end{align*} Where we use the constant $C$ from the upper bound proof in inequality $\eqref{ineq111}$, and the last inequality follows since $\sigma_{k}$ is a sub-series of $\sigma$.

We now prove that the competitive-ratio is preserved. Let $t$ be some time where there are active jobs and observe time $t$. Let $k$ be such that $t \in [(k-1)\frac{\Tilde{T}}{2}, k\frac{\Tilde{T}}{2})$. $WLOG$ we assume that $k \geq 2$ (the case $k=1$ is trivial). As previously stated, the active jobs consist solely of the jobs belonging to groups $k-1$ or $k$.
Based on the aforementioned observations, we conclude the following:
\begin{align}
    ON(\sigma,t) 
    &= \| \ell^{\sigma_{k-1}}(t) + \ell^{\sigma_k}(t) \|_p
    \leq \| \ell^{\sigma_{k-1}}(t) \|_p + \| \ell^{\sigma_k}(t) \|_p \label{ineq31}
    \\ &= ON(\sigma_{k-1}, t) + ON(\sigma_{k}, t)
    \leq ON(\sigma_{k-1}) + ON(\sigma_{k})\nonumber
    \\ &\leq
    C \cdot (\mu (p+ \log \Tilde{T}) \cdot OPT(\sigma) +
    C \cdot (\mu (p+ \log \Tilde{T})\cdot OPT(\sigma) 
    \\ &= \bigO\left(\mu (p+ \log (\mu \tilde{D}))\right)\cdot OPT(\sigma) \nonumber
\end{align}
Inequality \eqref{ineq31} follows by Minkowski's Inequality \cite{B05}.

\noindent Since it holds for every $t$, we conculde that $ON(\sigma) = \max_{t=1}^{T} ON(\sigma, t) \leq \bigO\left(\mu (p+ \log (\mu \tilde{D}))\right)\cdot OPT(\sigma)$, Hence the assumption can be removed.

\begin{figure}[t]
    \centering
    \includegraphics[width=15cm]{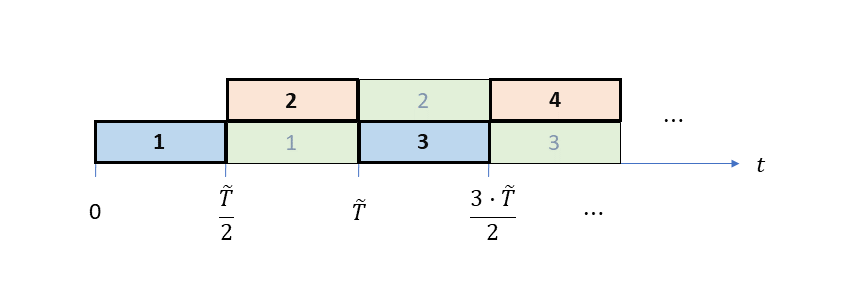}
    \caption{The blue and orange boxes represent distinct time intervals, with a separate instance of the Load Balancing Algorithm (\ref{lp_norm_algorithm}) executed within each interval. In the context of these intervals, a green box labeled with the number "$k$" indicates that jobs from the $k$-th copy of the algorithm can still be active throughout that specific green interval. These jobs terminate before the green interval concludes.}
    \label{fig:assumption_1_removal}
\end{figure}

\section{Lower Bounds}\label{sec:lower}

In this Section we prove Theorem \ref{thm:lower} that we repeat here for convenience.

\begin{theorem}

    Let $p\in [1,\log m]$, $\mu\leq m$, $\tilde{D}\leq {m}^{m}$. Then, any deterministic online algorithm for the load balancing of temporary tasks with has a competitive ratio of at least \\
    $\Omega\left(\mu^{\frac{1}{2} - \theta(\frac{1}{p})} + p+(\frac{\log \tilde{D}}{\log \log \tilde{D}})^{\frac{1}{2}-\theta(\frac{1}{p})}\right)$, where $\frac{1}{2}-\theta(\frac{1}{p})=\frac{p}{2p-1}-\frac{1}{p}$.
    In particular, for $\ell_{\infty}$, the competitive ratio is at least
    $\Omega\left(\sqrt{\mu}+\log m+\sqrt{\frac{\log \tilde{D}}{\log \log \tilde{D}}} \right)$. This lower bounds also hold for the restricted assignment model.
\end{theorem}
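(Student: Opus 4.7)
The plan is to establish the three additive terms of the lower bound separately. Since $\max(a,b,c) \geq \tfrac{1}{3}(a+b+c)$, it suffices to exhibit three independent hard instances, one achieving each term. All three constructions will be presented in the restricted assignment model, so the theorem's restricted-assignment claim follows automatically.

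For the $p$ (equivalently $\log m$) term, I would use permanent jobs with $\tilde{d}_j = d_j$ both effectively infinite. Since no job ever departs, the distortion window is irrelevant, and standard lower bounds for permanent restricted assignment apply unchanged: the $\Omega(p)$ bound of~\cite{C08} for $\ell_p$, and the $\Omega(\log m)$ bound of~\cite{ANR95} for $\ell_\infty$. For the $\bigl(\log\tilde{D}/\log\log\tilde{D}\bigr)^{1/2-\theta(1/p)}$ term, I would instantiate the clairvoyant temporary-jobs construction of~\cite{AAE03} with $\mu_1 = \mu_2 = 1$, so that $\tilde{d}_j = d_j$ exactly and the predictions add no information beyond what $d_j$ already provides. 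This immediately yields the $\Omega\bigl(\sqrt{\log D/\log\log D}\bigr)$ bound for $\ell_\infty$, and the standard $\ell_p$-scaling of the adversarial sequence produces the $1/2 - \theta(1/p)$ exponent.

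The $\mu^{1/2-\theta(1/p)}$ term is the essentially new part, which I would prove via a bounded-slack adaptation of the non-clairvoyant lower bound of~\cite{ABK94}. All jobs share a common prediction $\tilde{d}_j = \tilde{d}$, so actual durations may lie anywhere in $[\tilde{d}/\mu_2,\,\mu_1\tilde{d}]$, i.e.\ a factor-$\mu$ window. The adversary proceeds in phases on nested subsets of machines; after observing the algorithm's placement in each phase, the adversary sets the durations of jobs on heavily used machines to the maximum $\mu_1\tilde{d}$ (so the algorithm's load compounds) and on lightly used machines to the minimum $\tilde{d}/\mu_2$ (so OPT can reuse those machines in later phases without blowing up its load). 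Running $\Theta(\sqrt{\mu})$ phases on a single machine yields algorithm load $\Theta(\sqrt{\mu})$ there while OPT maintains $O(1)$ load throughout. For general $\ell_p$, distributing the construction across groups of machines and applying the standard averaging argument à la~\cite{IKKP19} produces the exponent $1/2-\theta(1/p) = p/(2p-1) - 1/p$, which is consistent with the vanishing of the term at $p=1$ (when $\ell_1$-norm is preserved under any assignment).

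The main obstacle will be the third construction: the adversary must simultaneously keep the algorithm's load growing on its chosen machines while leaving OPT enough global scheduling freedom to attain $O(1)$ load at every time slot. This requires a careful inductive phase structure, capping the number of phases at $\Theta(\mu^{1/2-\theta(1/p)})$ so that OPT's reassignment across machines and phases remains feasible under the factor-$\mu$ slack; pushing beyond this count would force either a duration longer than $\mu_1\tilde{d}$ or a duration shorter than $\tilde{d}/\mu_2$, violating the distortion constraint. I would conclude by verifying that the instance can be realized with $\tilde{D} \leq m^m$ and $\mu \leq m$ as demanded in the theorem's hypothesis.
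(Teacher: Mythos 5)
Your overall decomposition matches the paper: three independent restricted-assignment constructions, one per additive term, with permanent jobs (predictions equal to the true durations) giving the $\Omega(p)$, resp.\ $\Omega(\log m)$, term. The genuine gap is in the $\mu^{\frac{1}{2}-\theta(\frac1p)}$ term, which is the new content of the theorem. As stated, your plan runs $\Theta(\sqrt{\mu})$ phases and asserts that each phase forces $\Theta(1)$ load onto the algorithm on (eventually) one machine while OPT stays $O(1)$, but no mechanism is given for forcing load onto a designated machine: with a two-machine gadget (a common target plus a fresh alternative) the algorithm can keep choosing the fresh machine, and if the adversary then lengthens those jobs, OPT has to park them somewhere --- parking them on the common target makes OPT's load grow with the number of phases. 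What makes the argument work (and what the paper does) is a dichotomy your sketch is missing: fix a pool of $k=R^{\frac{p}{2p-1}}$ shared machines, where $R=\min\{\mu,\tfrac m2\}$, run up to $R=\Theta(\mu)$ iterations (not $\Theta(\sqrt{\mu})$), each with its own fresh escape machine, give every job prediction $\tilde d_j=1$, and in iteration $r$ release up to $k$ unit jobs restricted to the pool plus machine $k+r$. If the algorithm serves an entire iteration on the escape machine it has load $k$ there; otherwise each iteration deposits at least one job on the pool, which the adversary keeps alive until time $R$ (legal, since the slack window is a factor $R\le\mu$), so at time $R$ the pool carries $R$ live jobs and convexity gives $\|\ell(R)\|_p\ge (k(R/k)^p)^{1/p}=k$. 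OPT puts each long job on its iteration's escape machine and the short jobs one per pool machine, so $\|\ell^*(t)\|_p\le (k+R)^{1/p}$, giving ratio $\Omega\bigl(R^{\frac{p}{2p-1}-\frac1p}\bigr)$. The iteration count must be $\Theta(\mu)$ --- that is where the full factor-$\mu$ slack and the hypothesis $\mu\le m$ (you need $R$ escape machines) enter; capping the phases at $\mu^{\frac12-\theta(\frac1p)}$ as you propose also caps the number of coexisting long jobs and cannot yield the bound.

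A secondary gap: for the $\tilde D$ term you invoke the clairvoyant $\ell_\infty$ bound of \cite{AAE03} and claim the $\ell_p$ exponent follows by ``standard scaling,'' but that bound concerns only the maximum load and the exponent $\frac{p}{2p-1}-\frac1p$ does not follow by rescaling. The paper instead rebuilds the same pool-plus-escape construction with exact durations ($\mu=1$), replacing the duration slack by nested departure times: in each iteration the $k$ candidate jobs have staggered departure times, and the next iteration recurses into the gap between the last two departures, which supports $R=\Theta\bigl(\frac{\log\tilde D}{\log\log\tilde D}\bigr)$ levels before the gaps shrink below $1$; the long-lived jobs then accumulate on the pool exactly as above. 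So the right fix for both terms is the same dichotomy construction, instantiated once with duration slack and once with nested intervals.
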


We prove the lower bound in three steps: 
$\Omega(p)$,
$\Omega\left(\mu^{\frac{1}{2}-\theta(\frac{1}{p})}\right)$ and 
$\Omega\left((\frac{\log \tilde{D}}{\log \log \tilde{D}})^{\frac{1}{2}-\theta(\frac{1}{p})}\right)$.
We note that the $\Omega(p)$ lower bound is proved for $\mu = 1$ and $\tilde{D} = 1$;
the lower bound of $\Omega\left(\mu^{\frac{1}{2}-\theta(\frac{1}{p})}\right)$ is proved for $\tilde{D}=1$ and $p \leq \mu^{\frac{1}{2}-\theta(\frac{1}{p})}$;
and the $\Omega\left((\frac{\log \tilde{D}}{\log \log \tilde{D}})^{\frac{1}{2}-\theta(\frac{1}{p})}\right)$ lower bound is proved for $\mu = 1$ and $\log m \leq (\frac{\log \tilde{D}}{\log \log \tilde{D}})^{\frac{1}{2}-\theta(\frac{1}{p})}$.
Hence, the lower bound holds for any admissible parameters.

First we prove the $\Omega(p)$ lower bound. This is done by simulating the lower bound of $\Omega(p)$ for the permanent jobs case (as mentioned in \cite{AAGKKV95}), where the predictions of the durations are set to be the exact real durations (thus eliminating the need for predictions). We conclude that the competitive ratio of any deterministic algorithm for the online $\ell_p$-norm minimization of temporary tasks with predictions is at least $\Omega(p)$.
Furthermore, for the online $\ell_\infty$-norm minimization of temporary tasks with predictions, the competitive ratio is proven to be at least $\Omega(\log m)$.

Next, we show a lower bound as a function of $\mu$.
\begin{lemma} \label{sqrt_mu_lower_bound}
    The competitive ratio of any deterministic algorithm for the online $\ell_p$-norm minimization of temporary tasks with predictions with distortion $\mu \leq m$ is at least $\Omega\left(\mu^{\frac{p}{2p-1}-\frac{1}{p}}\right) = \Omega\left(\mu^{\frac{1}{2}-\theta(\frac{1}{p})}\right)$.
\end{lemma}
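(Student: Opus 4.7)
The plan is to construct, for any deterministic online algorithm $ON$, an adversarial instance with predicted duration $\tilde{d_j}\equiv 1$ and with $\mu_1=\mu$, $\mu_2=1$ (so the adversary may pick any integer actual duration in $[1,\mu]$), forcing $ON$'s $\ell_p$-norm load to exceed the offline optimum by a factor of $\Omega(\mu^{(p-1)^2/(p(2p-1))})$. I would prove the bound in the restricted-assignment model, which also yields it for unrelated machines. The instance is organized in $K$ phases. At the start of phase $k$ (at time $t=k-1$) a batch of $N_k$ unit-load jobs arrives, each restricted to a carefully chosen subset $S_k\subseteq [m]$. The adversary declares actual durations retroactively: after $ON$ commits its assignment of the batch, a $\rho$-fraction of the batch is labelled \emph{long} (duration $\mu$) and the rest \emph{short} (duration $1$, so they depart before phase $k{+}1$). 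The value of $\mu$ is large enough that every long job is still alive at time $K$.

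The adversary's core rule is to select the long jobs so as to concentrate persistent load: the $\rho N_k$ jobs marked long are precisely those placed by $ON$ on the machines currently carrying the heaviest accumulated long-job load. The subsets $S_k$ are designed so that $ON$ cannot simply avoid the previously heavy machines---the restriction forces overlap between $S_k$ and the currently loaded machines. Parameters $N_k$, $\rho$, and $K$ are then tuned so that (i) the total number of long jobs is $\Theta(m)$, so $OPT$---which knows the durations---can place one long per machine and absorb each phase's short jobs in parallel, giving $\max_t\|\ell^*(t)\|_p = O(m^{1/p})$; and (ii) even an adaptive $ON$ is forced to carry an $\ell_p$-norm of at least $\Omega\bigl(m^{1/p}\cdot \mu^{(p-1)^2/(p(2p-1))}\bigr)$ by the end of phase $K$.

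For the $\ell_\infty$ limit $p\to\infty$ this specialises to the classical $\sqrt{\mu}$ tower: take $K=\sqrt{\mu}$, $N_k=m$, $\rho=1/\sqrt{\mu}$, $S_k=[m]$, and the adversary's concentration rule piles one long per phase onto the same $m/\sqrt{\mu}$ machines, giving $\|\ell\|_\infty\geq \sqrt{\mu}$ versus $\|\ell^*\|_\infty=O(1)$. For finite $p$, the heart of the argument is a phase-by-phase recurrence on the potential $\Phi_k \triangleq \|\ell^{\mathrm{long}}(k)\|_p^p$, where $\ell^{\mathrm{long}}(k)$ is the load vector after phase $k$ restricted to long jobs. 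Using convexity of $x\mapsto x^p$ and the restricted-assignment structure, I would show that against any online response the adversary can force $\Phi_{k+1}\geq \Phi_k + \Delta$ for an increment $\Delta=\Delta(p,\rho,|S_k|)$; summing the recurrence and taking $p$-th roots produces the claimed bound, which combined with the $O(m^{1/p})$ bound on $OPT$ yields the ratio.

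The main obstacle is item (ii), the game-theoretic analysis for intermediate $p$. For $p=\infty$ concentration is trivially bad for $ON$ and pigeonhole on $S_k$ suffices; for $p=1$ there is nothing to prove. But for $1<p<\infty$ the algorithm has real freedom---it may spread or stack, depending on which inflates its $\ell_p$-norm less---and the adversary's optimal long-job selection depends on the current load profile. Working out the saddle point of this per-phase game and tuning $K$, $N_k$, $\rho$ so that the induced exponent of $\mu$ is exactly $\tfrac{p}{2p-1}-\tfrac{1}{p}$ (rather than a weaker value) is the delicate part. In particular, one must balance the marginal gain of the adversary's concentration against the marginal cost of $ON$'s spreading over the restricted subsets $S_k$, so that both bounds in (i)--(ii) are simultaneously tight.
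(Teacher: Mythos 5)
Your high-level ingredients match the paper's (all predictions $\tilde d_j=1$, adaptively chosen actual durations in $[1,\mu]$, restricted assignment, long jobs that persist versus short jobs that vanish, and Jensen/convexity to convert ``many long jobs on few machines'' into an $\ell_p$ bound, targeting the same exponent since $\frac{p}{2p-1}-\frac1p=\frac{(p-1)^2}{p(2p-1)}$). But the proposal has a genuine gap exactly where you flag ``the delicate part'': the per-phase forcing argument is not supplied, and the one concrete instantiation you do give is wrong. With $S_k=[m]$ and batches of $m$ jobs, the adversary's concentration rule cannot ``pile one long per phase onto the same $m/\sqrt{\mu}$ machines,'' because the adversary may only lengthen jobs that $ON$ actually placed; $ON$ can simply route each new batch onto machines carrying no long jobs (there are always plenty, since at most $\rho N_k$ machines get contaminated per phase), keeping its maximum load $O(1)$ while $OPT$ is also $O(1)$. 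So the sketched $\ell_\infty$ specialization yields no lower bound, and for general $p$ the claimed recurrence $\Phi_{k+1}\ge\Phi_k+\Delta$ is asserted, not proved; nothing in the proposal prevents $ON$ from spreading each batch over fresh machines inside $S_k$ unless the sets $S_k$ are constructed very carefully, which is precisely the content you leave open.

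The paper avoids this by changing the mechanics of a phase. Set $R=\min\{\mu,m/2\}$ and $k=R^{p/(2p-1)}$. In iteration $r$ jobs are released \emph{one at a time}, each restricted to the $k$ ``hub'' machines $\{1,\dots,k\}$ together with a \emph{fresh} machine $k+r$ used only in this iteration; the moment $ON$ places a job on a hub machine, that job is made long (alive until time $R$) and the iteration ends, while jobs placed on the fresh machine are short. This sequential stopping rule creates an unavoidable dichotomy: either $ON$ stacks all $k$ jobs of some iteration on its fresh machine (load $k$ immediately), or every iteration leaves one long job on the hubs, so after $R$ iterations $R$ long jobs sit on $k$ machines and Jensen gives $\|\ell(R)\|_p\ge k^{1/p}(R/k)=k$. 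Meanwhile $OPT$ puts each long job on the fresh machine of its iteration and spreads the shorts, so $\max_t\|\ell^*(t)\|_p\le(2R)^{1/p}$, giving the ratio $\Omega\bigl(R^{\frac{p}{2p-1}-\frac1p}\bigr)$. If you want to salvage your batch-based framework, you would need to build this kind of per-phase escape machine (or shrinking restriction sets) into $S_k$ and actually prove the per-phase increment against an arbitrary spreading strategy; as written, the proposal does not establish the lemma.
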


\begin{remark}
    Specifically, in the $\ell_{\infty}$-norm, the competitive ratio of any algorithm with distortion $\mu \leq m$ is at least $\Omega\left(\sqrt{\mu}\right)$.
\end{remark}

\begin{proof}[Proof of Lemma \ref{sqrt_mu_lower_bound}]
    Let $ON$ be an online algorithm for the problem, and let $\mu \geq 1$.
    For convenience we let $R \triangleq \min\{\mu, \frac{m}{2}\}$, and define $k \triangleq R^{\frac{p}{2p-1}}$.
    We will construct a job sequence $\sigma$ with a distortion of $R$, such that the competitive ratio of $ON$ on this specific input is at least $\Omega\left(R^{\frac{p}{2p-1}-\frac{1}{p}}\right)$.

    \paragraph*{Sequence definition:}

    The sequence consists of a maximum of $R$ iterations. Each iteration is associated with a unique machine, and the jobs within each iteration can only be assigned to either the first $k$ machines or the machine specific to that iteration. The sequence is structured in the following manner: during iteration $r$, all the jobs associated with that iteration are released by time $t_r$, and their predicted duration is $\tilde{d_j} = 1$. If the online algorithm assigns a job to the unique machine, the job departs right before the beginning of the next iteration. On the other hand, if the online algorithm $ON$ assigns a job to one of the first $k$ machines, we denote that specific job as $j_r$. In this case, the job departs from the machine at time $R$ and we start the next iteration.

    \begin{algorithm}[t]
    \phantomsection
    \label{lower_bound_1_description}
    \captionsetup{type=algorithm}
    \caption*{Description of the lower bound of Lemma \ref{sqrt_mu_lower_bound}}
    \begin{algorithmic}
        \State $t_1 \gets 0$
        \For{$r \gets 1$ to $R$}
    
            \For{$j \gets 1$ to $k$}
                \State Release job $j$ by time $t_j = t_r$ with $\tilde{d_j}=1$ and
                \[p_{ij}= \left\{\begin{array}{ll}
                1 & i\in \{1, \dots, k, k+r\}\\
                \infty & \mbox{Otherwise}\end{array}\right.\]
                
                \If{$ON$ assigns the job to machine $k+r$}
                    \State $d_j \gets 1$
                \ElsIf{$ON$ assigns the job to machine $\in \{1, \dots, k \}$} 
                    \State $d_j \gets R - t_j$
                    \State break   \hfill\COMMENT{//start the next iteration}
                \EndIf
                
            \EndFor
            
            \If{$ON$ has load of $k$ on some machine}
                \State break \hfill\COMMENT{//start the next iteration}
            \EndIf
            \State $t_{r+1} \gets t_r + 1$
            
        \EndFor
        
    \end{algorithmic}
    \end{algorithm}

    It is worth noting that the sequence is well-defined since $k+R \leq m$, and the distortion of the sequence is at most $R$.
    To provide a comprehensive understanding of the sequence, a detailed \hyperref[lower_bound_1_description]{description} is given.
    
    \paragraph*{Analysis:}
    It is evident that in each iteration $r$, $OPT$ would assign job $j_r$ to machine $k+r$, while the remaining $k-1$ jobs would be evenly distributed, with one job per machine, among a subset of $k-1$ machines chosen arbitrarily. Since we have a maximum of $R$ iterations, it implies that at any given time, $OPT$ has at most $k+R$ active jobs distributed among $k+R$ machines.
    Therefore, 
    \begin{align*}
        OPT(\sigma)
        = \max_{t=1}^{R} \|\ell^{*}(t)\|_{p} 
        = \max_{t=1}^{R} (\sum_{i=1}^{m} \ell^{*}_{i}(t)^{p})^{\frac{1}{p}} 
        \leq (k+R)^{\frac{1}{p}}
        \leq (2R)^{\frac{1}{p}}
    \end{align*}   
    On the other hand, if the sequence terminates at some iteration $r < R$, then $ON$ assigns all $k$ jobs to machine $k+r$. As a result, by that time, the load on that particular machine amounts to $k$, and so
    \begin{align*}
    ON(\sigma) =  \max_{t=1}^{{r}} ON(t) \geq ON(r) = \|\ell(r)\|_{p}  \geq (k^{p})^{\frac{1}{p}} = k
    \end{align*}
    If the sequence terminates after $R$ iterations, then there are $R$ active jobs, each of which is assigned to one of the first $k$ machines. Hence,
    \begin{align}
        ON(\sigma) 
        &= \max_{t=1}^{R} ON(t) 
        \geq ON(R)
        = (\sum_{i=1}^{m} \ell_{i}(R)^{p})^{\frac{1}{p}}
        \geq (\sum_{i=1}^{k} \ell_{i}(R)^{p})^{\frac{1}{p}} \nonumber
        \\& \geq (k(\frac{R}{k})^{p})^{\frac{1}{p}} \label{ineq61}
        \\ &= (k^{1-p} R^{p})^{\frac{1}{p}} 
        = (k^{1-p} (k^{\frac{2p-1}{p}})^{p})^{\frac{1}{p}} = k \nonumber
    \end{align}
    Inequality \eqref{ineq61} holds by using Jensen's Inequality with the convex function $f(x) = x^{p}$, it holds:
    $ f(\frac{1}{k}\sum_{i=1}^{k} \ell_{i}(R))\leq \frac{1}{k}\sum_{i=1}^{k} f(\ell_{i}(R))$. By applying the constraint $\ell_{1}(R) + \ldots + \ell_{k}(R) = R$, we finish.
    
    Therefore, the competitive ratio is lower bounded by:
    \begin{align*}
        \frac{ON(\sigma)}{OPT(\sigma)}
        \geq \frac{k}{(2R)^{\frac{1}{p}}}
        = \frac{R^{\frac{p}{2p-1}}}{(2R)^{\frac{1}{p}}}
        = \Omega\left(R^{\frac{p}{2p-1}-\frac{1}{p}}\right)
    \end{align*}
    \noindent Note that for the $\ell_{\infty}$-norm, as $p\to \infty$ this yields us a lower bound of $\Omega\left(\sqrt{R}\right)$.
\end{proof}
Finally, we show a lower bound as a function of $\tilde{D}$.
\begin{lemma} \label{sqrt_logD_lower_bound}
    The competitive ratio of any deterministic algorithm for the online $\ell_p$-norm minimization of temporary tasks with predictions with $\tilde{D} \leq m^m$ is at least $\Omega\left((\frac{\log \tilde{D}}{\log \log \tilde{D}})^{\frac{p}{2p-1}-\frac{1}{p}}\right) = \Omega\left((\frac{\log \tilde{D}}{\log \log \tilde{D}})^{\frac{1}{2}-\theta(\frac{1}{p})}\right)$.
\end{lemma}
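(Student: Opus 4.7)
My plan is to mimic the construction of Lemma~\ref{sqrt_mu_lower_bound}, but since $\mu=1$ forbids the adversary from adaptively stretching durations, I will instead exploit the wide range of allowed durations $[1,\tilde{D}]$ by stretching the phases in time. I set $R=\lfloor\log\tilde{D}/\log\log\tilde{D}\rfloor$ so that $R^{R}\leq\tilde{D}$, and $k=R^{p/(2p-1)}$; the hypothesis on $m$ yields $m\geq k+R$ machines, split into base machines $\{1,\dots,k\}$ and escape machines $k+1,\dots,k+R$.

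I would produce a sequence of up to $R$ phases. Phase $r$ begins at time $t_r=(r-1)R^{R-1}$, and every job it releases has the exact (since $\mu=1$) duration $d_r=(R-r+1)R^{R-1}$, so every phase-$r$ job is alive at the common critical time $T^{*}=t_R$, while $d_1=R^{R}\leq\tilde{D}$ keeps all durations admissible. Within phase $r$ the adversary releases jobs one by one, each restricted to $\{1,\dots,k\}\cup\{k+r\}$: release the next job iff ON placed the previous one on escape $k+r$, end the phase as soon as ON places on a base machine, and stop the entire sequence if all $k$ jobs of some phase end up on escape $k+r$. The ON analysis then follows Lemma~\ref{sqrt_mu_lower_bound} verbatim: either the sequence terminates early with $\|\ell\|_p\geq k$ from a single escape, or all $R$ phases complete and ON has placed one ``special'' base job per phase; Jensen's inequality applied to $R$ base jobs spread over $k$ base machines then gives $\|\ell(T^{*})\|_p\geq R/k^{(p-1)/p}=k$.

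The OPT analysis is the one step where the proof does more than literally recycle Lemma~\ref{sqrt_mu_lower_bound}. For the produced sequence $\{j_r\}_{r=1}^R$, I would have OPT assign all $j_r$ jobs of phase $r$ to the corresponding escape $k+r$, so $\|\ell^{*}(T^{*})\|_p=(\sum_r j_r^{p})^{1/p}$, and the competitive ratio is at least $k/(\sum_r j_r^{p})^{1/p}$. The main obstacle will be showing that $\sum_r j_r^{p}=O(R)$ against any deterministic online algorithm: in Lemma~\ref{sqrt_mu_lower_bound} this was essentially immediate because the non-special jobs of each phase had actual duration $1$ and so did not burden OPT, whereas here every phase-$r$ job has duration $d_r$ and therefore persists on whichever machine OPT places it.

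I plan to handle this via a self-bounding argument that couples $j_r$ to ON's own escape load: each additional job ON places on escape $k+r$ (rather than ending the phase) contributes $1$ to that escape's final load, so ON's $\ell_p^{p}$ already contains $\sum_r(j_r-1)^{p}$; combined with the base contribution $\Omega(k^{p})$, this shows ON pays at least as much on the escape side as OPT's all-to-escape assignment, while paying an additional $\Omega(k)$ on the base side that OPT avoids. Hence the claimed ratio $\Omega(k/R^{1/p})=\Omega((\log\tilde{D}/\log\log\tilde{D})^{p/(2p-1)-1/p})=\Omega((\log\tilde{D}/\log\log\tilde{D})^{1/2-\theta(1/p)})$ follows, and together with the already-established $\Omega(p)$ and $\Omega(\mu^{1/2-\theta(1/p)})$ lower bounds it completes the proof of Theorem~\ref{thm:lower}.
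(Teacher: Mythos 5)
Your plan diverges from the paper's construction in the one place that matters, and the divergence is fatal. In your sequence every job of phase $r$ has the \emph{same} duration $d_r=(R-r+1)R^{R-1}$, so all jobs of all phases --- including the $j_r-1$ jobs that ON parks on the escape machine $k+r$ --- are still alive at the critical time $T^{*}$. This means the escape jobs burden OPT exactly as much as they burden ON. Concretely, consider the ON strategy that in every phase places the first $k-1$ jobs on the escape $k+r$ and only the $k$-th job on a base machine. Then at $T^{*}$ there are $Rk$ live jobs confined to $k+R\le 2R$ machines, so by convexity \emph{any} assignment, in particular OPT's, has $\|\ell^{*}(T^{*})\|_p\ge\bigl((k+R)\,(\tfrac{Rk}{k+R})^{p}\bigr)^{1/p}=\Omega(kR^{1/p})$, while ON's cost is $\bigl(R(k-1)^{p}+\Theta(k^{p})\bigr)^{1/p}=O(kR^{1/p})$; the ratio is $O(1)$, not $\Omega(k/R^{1/p})$. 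Your self-bounding argument cannot repair this: the inequality you would need, roughly $\sum_r(j_r-1)^{p}+k^{p}\ge c\,(k^{p}/R)\sum_r j_r^{p}$, is false precisely when ON drives the $j_r$ up to $k$, and ON is free to do so. The additive $\Omega(k^{p})$ that ON pays on the base side is swamped once $\sum_r j_r^{p}$ is large, and OPT's cost is then genuinely large too, so no multiplicative gap survives.

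The missing idea is the paper's staggering of departure times \emph{within} a phase and the resulting nesting of phases in time, which is what lets $\mu=1$ work. In the paper, the $k$ jobs of iteration $r$ all arrive at $t_r$ but depart at the distinct times $\tfrac{x_r}{2}(1+\tfrac{j}{k})$; when ON first places some job $j$ on a base machine, the adversary restarts inside the interval between the departure of job $j-1$ and the departure of job $j$. Hence every escape job of iteration $r$ is dead before iteration $r+1$ begins, while the single base job survives through all later iterations, so OPT's load stays $O\bigl((k+R)^{1/p}\bigr)$ at every time whereas ON accumulates either $k$ on one escape or $R$ special jobs on $k$ base machines. The number of iterations is then limited by how many times the horizon can shrink by a factor of about $2k$, which is exactly what yields $R=\Theta(\log\tilde{D}/\log\log\tilde{D})$ under $\tilde{D}\le m^{m}$; your arithmetic spacing $t_r=(r-1)R^{R-1}$ with uniform per-phase durations does not create this asymmetry between ON's sunk base jobs and the escape jobs, and so the lower bound does not follow.
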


\begin{remark}
    Specifically, in the $\ell_{\infty}$-norm, the competitive ratio of any algorithm with distortion $\tilde{D} \leq m^m$ is at least $\Omega\left(\sqrt{\frac{\log \tilde{D}}{\log \log \tilde{D}}}\right)$.
\end{remark}

\begin{proof}[Proof of Lemma \ref{sqrt_logD_lower_bound}]
    The proof is similar to the the proof of Lemma \ref{sqrt_mu_lower_bound}.

    \begin{algorithm}[t!]
    \phantomsection
    \label{lower_bound_2_description}
    \captionsetup{type=algorithm}
    \caption*{Description of the lower bound of Lemma \ref{sqrt_logD_lower_bound}}
    \begin{algorithmic}
        \State $t_1 \gets 0$
        \State $x_1 \gets \tilde{D}$
        \For{$r \gets 1$ to $R$}
    
            \For{$j \gets 1$ to $k$}
                \State Release job $j$ by time $t_j = t_r$ with departure time $t_j + \tilde{d_j} = \frac{x_r}{2}(1+\frac{j}{k})$ and
                \[p_{ij}= \left\{\begin{array}{ll}
                1 & i\in \{1, \dots, k, k+r\}\\
                \infty & \mbox{Otherwise}\end{array}\right.\]

                \If{$ON$ assigned the job to machine $\in \{1, \dots, k \}$ AND $j=1$} 
                    \State $t_{r+1} \gets \frac{x_r - t_r}{2}$
                    \State $x_{r+1} \gets$ departure's time of job $j=1$
                    \State break  \hfill\COMMENT{//start the next iteration}
                \EndIf
                
                \If{$ON$ assigned the job to machine $\in \{1, \dots, k \}$ OR $j=k$} 
                    \State $t_{r+1} \gets$ departure's time of job $j-1$ of the current iteration.
                    \State $x_{r+1} \gets$ departure's time of job $j$ of the current iteration.
                    \State break \hfill\COMMENT{//start the next iteration}
                \EndIf
                
            \EndFor
            \If{$ON$ has load of $k$ on some machine}
                \State break \hfill\COMMENT{//start the next iteration}
            \EndIf
            
        \EndFor
        
    \end{algorithmic}
    \end{algorithm}
    
    Let $ON$ be an online algorithm for the problem, and let $\tilde{D} \geq 1$.
    We create a jobs sequence with no estimations, i.e. $\mu=1$, hence in this case $D = \tilde{D}$.
    For convenience we let $R \triangleq \min\{\lfloor x \rfloor, \frac{m}{2}\}$, for some $x$ which solves the equality $\tilde{D} = \sqrt{x}^{x-\sqrt{x} + 1}$, and define $k \triangleq R^{\frac{p}{2p-1}}$.
    We will construct a job sequence $\sigma$ with $\tilde{D}(\sigma) = \tilde{D}$, such that the competitive ratio of $ON$ on this specific input is at least $\Omega\left(R^{\frac{p}{2p-1}-\frac{1}{p}}\right)$.

    \paragraph*{Sequence definition:}
    
    The sequence consists of a maximum of $R$ iterations. Each iteration is associated with a unique machine, and the jobs within each iteration can only be assigned to either the first $k$ machines or the machine specific to that iteration. The sequence is structured in the following manner: during iteration $r$, the departure times of the jobs are arranged in ascending order, meaning that the departure time of each job is greater than the departure time of the preceding job within the same iteration. As the next iteration, $r+1$, begins, all the jobs from the previous iteration, $r$, depart, except for the last job of iteration $r$, denoted as job $j_r$. Job $j_r$ remains active until the completion of the entire series.

    To provide a comprehensive understanding of the sequence, a detailed \hyperref[lower_bound_2_description]{description} along with a \hyperref[fig:lower_bound_D]{figure} are given.

    \begin{figure}[H]
    \centering
    \includegraphics[width=12cm]{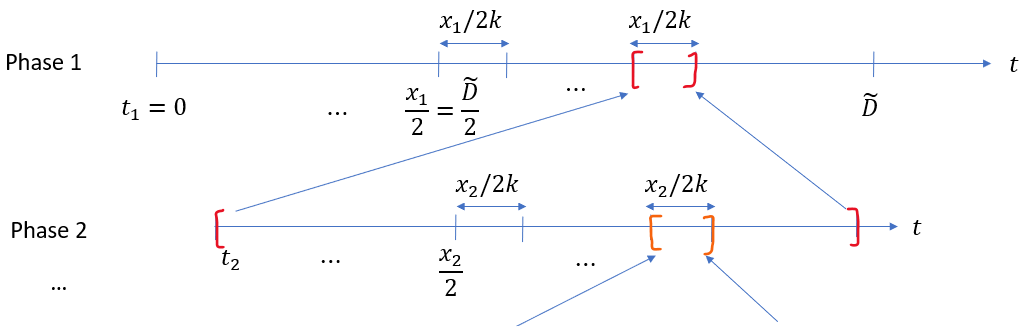}
    \caption{Phase 1 demonstrates the release of all jobs by time $0$, sorted in ascending order based on their durations. The gap between their departure times is consistent at $x_1/2k$. Once the algorithm assigns a job to any of the initial $k$ machines, we replicate the phase 1 steps for the time interval between the departure times of the last two jobs in phase 1. Subsequently, in Phase $i$, we maintain a consistent gap of $x_i/2k$ between the durations of the jobs.}
    \label{fig:lower_bound_D}
    \end{figure}
    
    \noindent The part of the analysis is the same as in Lemma \ref{sqrt_mu_lower_bound}.
    By noting that $ x = \Omega\left(\frac{\log \tilde{D}}{\log \log \tilde{D}}\right)$, we get the desired.
\end{proof}

\section{The Price of Estimation}\label{sec:price}

In this Section we prove Theorem \ref{thm:PoE}.

\noindent Note that within the results of the theorem we use $\tilde{D}$ although the concept of {\em Price of Estimation} does not contain estimations and it is actually an offline problem. However, in the case of no estimations, we have $D = \tilde{D}$. In our proof, we will use the notation $D$ instead of $\tilde{D}$.
In addition, by Observation \ref{POE=POE_1}, it is enough to show that $PoE_1(\mu) = \Theta(\mu \log D)$.

To establish the theorem, encompassing both the lower and upper bounds, we introduce a sequence of time points $t_1, \ldots, t_{j*}$, determined as follows:
\begin{algorithm}
\captionsetup{type=algorithm}
\caption*{Time points determination given $D, \mu$}
\begin{algorithmic}\label{times-series-def}
    \State $t_1 \gets 0$
    \State $t_2 \gets t_{1} + \frac{D-t_1}{\mu}$
    \State $j \gets 2$
    \While{$t_j - t_{j-1} \geq 1$}
        \State $t_{j+1} \gets t_{j} + \frac{D-t_j}{\mu}$ \label{t_j+1 def}
        \State $j \gets j + 1$
    \EndWhile
    \State $j^{*} \gets j - 1$
\end{algorithmic}
\end{algorithm}

\begin{figure}[htbp]
\centering
\includegraphics[width=12cm]{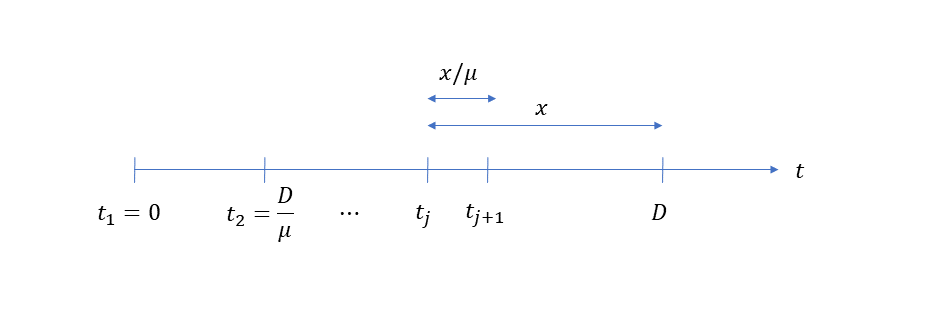}
\caption{Definition of $t_{j+1}$}
\label{fig:lower_bound_opt_comp}
\end{figure}

\begin{lemma}\label{induction_lemma}
        It holds that for every $j \geq 2$, $t_{j} = D \cdot \left[1-(1-\frac{1}{\mu})^{j-1}\right]$.
\end{lemma}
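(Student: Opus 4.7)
The plan is to prove this by straightforward induction on $j$, using the recursion $t_{j+1} = t_j + (D - t_j)/\mu$ given by line \ref{t_j+1 def} in the time points determination.

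For the base case $j = 2$, I would directly compute from the definition: $t_2 = t_1 + (D - t_1)/\mu = D/\mu$, and verify that the closed form yields $D\cdot[1 - (1 - 1/\mu)^{2-1}] = D \cdot (1/\mu) = D/\mu$, so they agree.

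For the inductive step, assuming the formula holds for some $j \geq 2$, I would substitute $t_j = D\cdot[1 - (1 - 1/\mu)^{j-1}]$ into the recurrence and simplify. The key algebraic manipulation is
\begin{align*}
t_{j+1} &= t_j + \frac{D - t_j}{\mu} = D\left[1 - \left(1 - \tfrac{1}{\mu}\right)^{j-1}\right] + \frac{D\left(1 - \tfrac{1}{\mu}\right)^{j-1}}{\mu} \\
&= D - D\left(1 - \tfrac{1}{\mu}\right)^{j-1}\left(1 - \tfrac{1}{\mu}\right) = D\left[1 - \left(1 - \tfrac{1}{\mu}\right)^{j}\right],
\end{align*}
which matches the claimed formula with index $j+1$. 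This closes the induction.

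There is essentially no obstacle here: the recurrence is linear in $t_j$ with the same coefficient at every step, so the induction goes through mechanically. The only thing to be careful about is to note that the recursive step is applied only while $t_{j+1} - t_j \geq 1$ (i.e.\ for $j < j^*$), but the closed form is a purely algebraic consequence of the recurrence and holds whenever $t_{j+1}$ is defined by line \ref{t_j+1 def}, so this does not affect the argument.
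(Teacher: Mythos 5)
Your proof is correct and follows essentially the same route as the paper: a direct induction on $j$ with base case $j=2$ (where $t_2 = D/\mu$ matches the closed form) and an inductive step that substitutes the hypothesis into the recurrence $t_{j+1} = t_j + \frac{D-t_j}{\mu}$; the only difference is a cosmetic one in how the algebra is arranged. Your remark that the closed form holds whenever $t_{j+1}$ is defined by the recurrence, regardless of the stopping condition, is a fine (and accurate) clarification.
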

\begin{proof}
    We will prove the Lemma using induction on $j$.

    \noindent \textbf{Base case (j = 2):  } By definition, we have $t_{1} = \frac{D}{\mu} = D \cdot \left[ 1- (1-\frac{1}{\mu})^{2-1} \right]$.\\
    \textbf{Inductive Step: } 
    Assuming that the Lemma holds for index $j$, let's proceed with the proof for index $j+1$. We have the following:
    \begin{align}
        t_{j+1} 
        &= t_j + \frac{D-t_j}{\mu} 
        = (1-\frac{1}{\mu})\cdot t_j + \frac{D}{\mu} \label{ineq71}\\
        &= (1-\frac{1}{\mu})\cdot D \cdot \left[1-(1-\frac{1}{\mu})^{j-1}\right] + \frac{D}{\mu} \label{ineq72}\\
        &= D \cdot \left[(1-\frac{1}{\mu}) - (1-\frac{1}{\mu})^{j}\right] + \frac{D}{\mu} = D \cdot \left[1-(1-\frac{1}{\mu})^{j}\right]\nonumber
    \end{align}
    Equality \eqref{ineq71} follows by the definition of $t_{j+1}$, and equality \eqref{ineq72} follows by the inductive step.
\end{proof}

\begin{lemma}\label{opt_lower_bound_lemma}
    $j^{*} = \Theta(\mu \log D)$.
\end{lemma}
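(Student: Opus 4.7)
The plan is to turn Lemma \ref{induction_lemma} into an explicit formula for the gap $t_{j}-t_{j-1}$, pin down $j^{*}$ as the index where this gap first falls below $1$, and then apply standard logarithmic estimates.

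First, I would combine the recurrence $t_{j+1}=t_{j}+(D-t_{j})/\mu$ with the closed form in Lemma \ref{induction_lemma} to obtain, for every $j\geq 2$,
$$t_{j}-t_{j-1} \;=\; \frac{D-t_{j-1}}{\mu} \;=\; \frac{D}{\mu}\Bigl(1-\frac{1}{\mu}\Bigr)^{j-2}.$$
This turns the recursive stopping rule into a purely algebraic condition on a geometric sequence.

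Second, by the definition of $j^{*}$ (the last index at which the while-body executes), it is characterised by the sandwich
$$\Bigl(1-\tfrac{1}{\mu}\Bigr)^{j^{*}-2} \;\geq\; \frac{\mu}{D} \;>\; \Bigl(1-\tfrac{1}{\mu}\Bigr)^{j^{*}-1}.$$
Taking natural logarithms of both inequalities and dividing by the positive quantity $-\ln(1-1/\mu)$ then squeezes $j^{*}$ between $\ln(D/\mu)/(-\ln(1-1/\mu))+1$ and $\ln(D/\mu)/(-\ln(1-1/\mu))+2$.

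Third, I would invoke the elementary bound $1/\mu\leq -\ln(1-1/\mu)\leq 1/(\mu-1)$, valid for all $\mu>1$, which converts the previous sandwich into
$(\mu-1)\ln(D/\mu)+1 \leq j^{*} \leq \mu\ln(D/\mu)+2,$
so $j^{*}=\Theta(\mu\log(D/\mu))$. In the regime relevant to Theorem \ref{thm:PoE} we have $\mu$ small compared with $D$ (otherwise the conclusion of the theorem degenerates to a constant), hence $\log(D/\mu)=\Theta(\log D)$ and the estimate collapses to $j^{*}=\Theta(\mu\log D)$.

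The main obstacle I anticipate is bookkeeping rather than substance: resolving the off-by-one implicit in the while-loop exit convention so that the two algebraic inequalities really do sandwich $\mu/D$, and checking that the estimate $-\ln(1-1/\mu)\in[1/\mu,1/(\mu-1)]$ loses only a constant factor uniformly in $\mu$ (the boundary case $\mu=1$ must be treated separately, but there $t_{2}=D$ and the loop terminates after a single step, matching the claimed $\Theta(\mu\log D)$ up to constants).
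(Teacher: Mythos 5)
Your proof is correct and takes essentially the same route as the paper: both use the closed form of Lemma \ref{induction_lemma} to turn the stopping rule into the geometric sandwich $(1-\tfrac{1}{\mu})^{j^*-2} \ge \mu/D > (1-\tfrac{1}{\mu})^{j^*-1}$ and then take logarithms, estimating $-\log(1-1/\mu)=\Theta(1/\mu)$ (the paper phrases this as a $\log_{1+\frac{1}{\mu-1}}$). One minor caveat: your closing remark about $\mu=1$ is not quite right (there $j^*=2$, which is not $\Theta(\log D)$), but the paper's own proof likewise implicitly assumes $\mu$ bounded away from $1$ and $\log(D/\mu)=\Theta(\log D)$, so your argument is no less careful than the original.
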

\begin{proof}
    Based on the definition of $\sigma$, $j^*$ is determined as the first index $j$ for which $t_{j+1}-t_j \leq 1$.
    We utilize Lemma \ref{induction_lemma} and deduce the following:

    \begin{align*}
        t_{j^{*}+1} - t_{j^{*}} 
        &= D \cdot \left[1-(1-\frac{1}{\mu})^{j^{*}}\right] - D \cdot \left[1-(1-\frac{1}{\mu})^{j^{*}-1}\right] \\
        &= D \cdot (1-\frac{1}{\mu})^{j^{*}-1} - D \cdot (1-\frac{1}{\mu})^{j^{*}}
        = D \cdot (1 - \frac{1}{\mu})^{j^{*}-1} \left[1- (1-\frac{1}{\mu})\right] \\
        &= \frac{D}{\mu} \cdot (1-\frac{1}{\mu})^{j^{*}-1}
        \leq 1
    \end{align*}

    \noindent Hence we have $\frac{D}{\mu} \leq (\frac{\mu}{\mu -1})^{j^{*}-1} = (1 + \frac{1}{\mu - 1})^{j^{*}-1}$, and by taking a $\log_{1 + \frac{1}{\mu - 1}}$ we achieve,
    \begin{align*}
        j^{*} 
        = \Omega\left( \log_{1+\frac{1}{\mu -1}} \frac{D}{\mu}\right)
        = \Omega\left( \frac{\log \frac{D}{\mu}}{\log (1+\frac{1}{\mu -1 })}\right)
        = \Omega\left(\mu \log \frac{D}{\mu} \right) 
        = \Omega\left(\mu \log D \right)
    \end{align*}

    \noindent In addition, with similar calculations and by observing the fact that $t_{j^{*}} - t_{j^{*}-1} \geq 1$, we conclude that $\frac{D}{\mu} \geq (\frac{\mu}{\mu -1})^{j^{*}-2} = (1 + \frac{1}{\mu - 1})^{j^{*}-2}$, and by taking a $\log_{1 + \frac{1}{\mu - 1}}$ we achieve,
    \begin{align*}
        j^{*} 
        = \bigO\left( \log_{1+\frac{1}{\mu -1}} \frac{D}{\mu}\right)
        = \bigO\left( \frac{\log \frac{D}{\mu}}{\log (1+\frac{1}{\mu -1 })}\right)
        = \bigO\left(\mu \log \frac{D}{\mu} \right) 
        = \bigO\left(\mu \log D \right)
    \end{align*}
\end{proof}

We now prove Theorem \ref{thm:PoE}.
\begin{proof}[Proof of upper bound of Theorem \ref{thm:PoE}]\label{PoE-upper-proof}
Given any instance $I$ of jobs with a single machine, our goal is to show that,
\begin{align*}
    \max_{t=1}^{T} \ell(I(\mu), t) \leq \bigO\left(\mu \log D \right) \cdot \max_{t=1}^{T} \ell(I, t)
\end{align*}
Let $\tilde{t} \in [1, T]$, and let $\tilde{t}_0 \triangleq \max \{ 1, \tilde{t} - D\}$. We observe the \hyperref[times-series-def]{Time points determination} and we employ a similar interval division approach on the interval $[\tilde{t}_0, \tilde{t}]$ as we previously did for the interval $[0, D]$. By doing so, we can create $j^{**}$ intervals, where $j^{**} \leq j^{*}$ due to the fact that $[\tilde{t}_0, \tilde{t}]$'s length is at most $D$.
We break the $i$'th machine's load at time $\tilde{t}$ by instance $I(\mu)$ to $\ell(I(\mu), \tilde{t}) = \ell(I, \tilde{t}_0) + \ell(I, \tilde{t}) + X$, where $X$ accounts for the load of jobs with arrival time in $[\tilde{t}_0, \tilde{t}]$ and which by instance $I$ are already considered as departed at time $\tilde{t}$ while for the jobs instance $I(\mu)$ they are still active.

Next, we decompose $X$ into smaller constituent parts: $X = X_1 + \ldots + X_{j^{**}} + Y$, with $X_k$ signifying the loads of jobs whose arrival times fall within the $k$-th interval. The term $Y$ covers all jobs with arrival times within the interval $[t_{j^{**}}, \tilde{t})$. Notably, the length of the interval $[t_{j^{**}}, \tilde{t})$ does not exceed $\mu +1$, allowing us to break it down into at most $\mu + 1$ intervals, each with a length of 1 (excluding the last interval which may be shorter). In total, we express $X = X_1 + \ldots + X_{j^{**}} + Y_1 + \ldots + Y_{\mu +1}$, where $Y_k$ signifies the load of jobs from the k-th interval ($[t_{k-1}^{''}, t_{k}^{''})$), with $t_{k}^{''} = t_{j^{**}} +k$.

The minimum duration of each job is 1. Thus, all jobs in $I$ with arrival times in the interval $[t_{k-1}^{''}, t_{k}^{''})$ must be active by the time $t_{k}^{''}$ (since that interval length is at most 1), so the load the machine by time $t_{k}^{''}$ is at least the load of these jobs, i.e. for every $k$, $Y_k \leq \ell(I,t_{k}^{''})$.

Furthermore, based on the definition of our sequence $t_{1}^{'}, t_{2}^{'}, \ldots, t_{j^{}}^{'}$, it follows that for every $k \leq j^{} - 1$, if a job arrives in the $k$-th interval with a certain duration and by extending its duration by a factor of $\mu$ it remains active until time $\tilde{t}$, then it must also be active by the time $t_{k+1}^{'}$. Otherwise, it would not remain active beyond time $t_{j+1}^{'}$.
This implies that for each $k \leq j^{**}-1$, we have $X_k \leq \ell(I, t_{k}^{'})$.

\noindent Hence we have that,
\begin{align*}
    \ell(I(\mu), \tilde{t}) 
    &\leq \ell(I, \tilde{t}_0) + \ell(I, \tilde{t})
    + \sum_{k=1}^{j^{**}} \ell(I,t_{k}^{'}) 
    + \sum_{k=1}^{\mu+1} \ell(I,t_{k}^{''}) \\
    &\leq (2+j^{**}+\mu+1) \cdot \max_{t=1}^{T} \ell(I, t) 
    \leq (2+j^{*}+\mu+1) \cdot \max_{t=1}^{T} \ell(I, t)  \\
    &= \bigO\left(\mu \log D \right) \cdot \max_{t=1}^{T} \ell(I, t)
\end{align*}
Since it holds for every $\tilde{t}$, we concluded that,
\begin{align*}
    \max_{t=1}^{T} \ell(I(\mu), t) \leq \bigO\left(\mu \log D \right) \cdot \max_{t=1}^{T} \ell(I, t)
\end{align*}

\end{proof}

\begin{proof}[Proof of lower bound of Theorem \ref{thm:PoE}]\label{PoE::_lower_proof}
    To prove this, we will construct an instance of jobs $I$ with a single machine, such that $\ell(I(\mu), D-\epsilon) = \Omega(\mu \log D)$ while $\ell(I, t) = 1$ for every $t$.

    The jobs instance $I$ is defined such that each job $j$ is alive during the time interval $[t_j, t_{j+1})$, which we refer to as the j-th interval. This means that at any given time, only one job is active. However, in the case of $I(\mu)$, each job will be considered active from its arrival time $t_j$ until the end time $D$. Therefore, by observing time $D-\epsilon$, where $\epsilon$ is a small positive value, all the jobs in the sequence are active. Formal definition of the jobs sequence:
    \begin{algorithm}
    \captionsetup{type=algorithm}
    \caption*{Jobs instance $I$ definition}
    \begin{algorithmic}
        \For{$j \gets 1$ to $j^{*}$}
            \State Release job $j$ by time $t_j$ with $p_{1,j} = 1$ and $d_j = t_{j+1} - t_j$
        \EndFor
    \end{algorithmic}
    \end{algorithm}
    
    According to $I$'s definition, $t_j + d_j = t_{j+1}$, which means that job $j$ departs immediately before the arrival of job $j+1$. Therefore, at any given time $t$, the load on the machine for $I$ is only 1, i.e. $\ell(I, t)=1$.
    On the other hand, $I(\mu)$'s jobs durations are $\mu$ times bigger. Hence, according to $I$'s definition, that means that all the jobs depart by the same time $D$. Therefore, by observing time $D-\epsilon$, where $\epsilon$ is a small positive value, the load on the single machine is equal to $j^*$, which is defined to be the number of jobs in the series. Finally, using Lemma \ref{opt_lower_bound_lemma}, we get the desired $\ell(I(\mu), D-\epsilon) = j^{*} = \Omega(\mu \log D)$.
\end{proof}

\section{Discussion and Open Problems}

An obvious open problem is closing the gap between the upper bounds and the lower bounds. For example, for the $\ell_{\infty}$ objective our algorithm is $\bigO\left(\mu\log (m\mu \tilde{D})\right)$-competitive, while the lower bounds are essentially  $\Omega\left(\sqrt{\mu}\right)$, $\Omega\left(\log m\right)$, and $\Omega\left(\sqrt{\frac{\log \tilde{D}}{\log \log \tilde{D}}}\right)$. Closing the quadratic gap in the logarithmic dependency on $D$ is an open problem even for the known duration case~\cite{AKPPW97,AAE03}.
Our focus in this work is the dependency on the duration distortion $\mu$. For this parameter there is also a quadratic gap between our lower bound of $\sqrt{\mu}$ and the upper bound that is linear in $\mu$.
We believe that it may be possible to improve the upper bound further. However, we observe that given the values $\mu_1$ and $\mu_2$, our algorithm is based on the pseudo loads using the estimated durations and never uses the actual duration. We prove in full version of the paper that any algorithm which is oblivious to the actual durations and only based on the estimations is at least $\Omega(\mu)$-competitive, for the maximum load objective. 
Hence, such an algorithm fails to improve the dependency on $\mu$. We remark that the gap is even larger for small values of $p$. For example, for the $\ell_2$ objective our lower bound is only $\Omega\left(\mu^{\frac{p}{2p-1}-\frac{1}{p}}\right)=\Omega\left(\mu^{1/6}\right)$.

\bibliography{Arxiv/refs}
\bibliographystyle{alpha}

\begin{appendices}
\section{A Counter Example for the Naive Algorithm}\label{counter_example}

By applying the best-known algorithm for load balancing of temporary tasks with known durations in the $\ell_\infty$-norm, which is shown in \cite{AKPPW97}, as black box by the upper bound in Section \ref{sec:price} the competitive ratio will be at most $O(\mu \log \tilde{D})$ bigger, i.e the competitive ratio will be $\bigO\left(\mu \log \tilde{D} (\log m + \log \tilde{D}) \right)$.
Here we demonstrate that the algorithm also for the general $\ell_p$-norm indeed reaches a bound, which is almost as high as claimed above and hence using the original algorithm as black box has worse performance.

\begin{lemma}
    By applying the best-known algorithm for load balancing of temporary tasks with known durations as shown in \cite{AKPPW97}, and considering $\tilde{d}_j$ the actual duration of each job $j$ as, the achievable competitive ratio is only lower bounded by $\Omega\left((\mu \log \tilde{D} \log m)^{1-\frac{1}{p}} \right)$.
\end{lemma}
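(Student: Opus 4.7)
The plan is to construct a single adversarial instance on $m$ machines that simultaneously exploits the three sources of difficulty in the problem: the $\mu$-distortion between real and predicted durations, the $\log \tilde{D}$ time-scale hierarchy, and the classical $\log m$ restricted-assignment lower bound. The guiding intuition is that the AKPPW-style algorithm places each job so as to minimise the increment of an exponential potential integrated over the \emph{predicted} lifetime $[t_j,\,t_j+\tilde d_j]$. When $d_j=\mu\,\tilde d_j$, the algorithm underestimates each job's true future contribution by a factor of $\mu$, so it will happily pack jobs onto machines whose scheduled loads look light but whose real loads remain heavy for a much longer window.

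\textbf{Step 1 (construction).} I would describe a nested instance $\sigma$ indexed by $(q,r)\in[\log m]\times[\log \tilde{D}]$, with $\mu$ copies of each job. For each pair $(q,r)$ the adversary releases a batch of restricted-assignment jobs with predicted duration $2^r$ and actual duration $\mu\cdot 2^r$; the $q$-th batch restricts these jobs to a nested family of $2^{\log m-q}$ machines in the style of the $\Omega(\log m)$ lower bound of \cite{ANR95}, and the $r$ index plays the role of the time-scale hierarchy used in the $\Omega(\log \tilde{D})$ known-duration lower bound of \cite{AAE03}. The $\mu$ copies implement the overestimation construction that drives the lower bound in Theorem~\ref{thm:PoE}. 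The release times of the batches are synchronised so that, at the critical snapshot time $t^{*}$ (chosen as in the proof of Theorem~\ref{thm:PoE}), the \emph{true} durations keep all decoy jobs alive simultaneously.

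\textbf{Step 2 (behaviour of the naive algorithm and of OPT).} I would then argue by induction on $q$ (and, at the deepest level, on $r$) that the naive AKPPW algorithm, whose potential discounts each job by exactly the factor $\mu$ by which its duration was under-reported, is forced into a "diagonal" assignment that concentrates the batches on a shrinking sequence of machines. Snapshotting at time $t^{*}$ then exhibits $\Omega(\mu\log \tilde{D}\log m)$ active jobs on a single machine, giving
\[
\|\ell(t^{*})\|_p \;=\; \Omega\!\bigl(\mu\log \tilde{D}\,\log m\bigr).
\]
By contrast, the offline optimum, knowing the true durations, can distribute these jobs uniformly across the $k:=\mu\log \tilde{D}\log m\le m$ machines that are available at that moment, so $\|\ell^{*}(t)\|_p=O(k^{1/p})$ for every $t$. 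Dividing the two bounds yields
\[
\frac{\max_t\|\ell(t)\|_p}{\max_t\|\ell^{*}(t)\|_p} \;=\; \Omega\!\left(\frac{\mu\log \tilde{D}\log m}{(\mu\log \tilde{D}\log m)^{1/p}}\right) \;=\; \Omega\!\left((\mu\log \tilde{D}\log m)^{1-\frac{1}{p}}\right),
\]
which is the desired lower bound.

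\textbf{Main obstacle.} The delicate part will be entangling the three separate constructions into one instance in which the AKPPW-style algorithm is provably (not merely heuristically) forced onto the adversarial diagonal. This requires maintaining an inductive invariant that simultaneously tracks the \emph{true} and the \emph{predicted} potentials on every nested machine subset, together with a charging scheme showing that any deviation by the algorithm still pays at least the target marginal increase. This is also the step where one must explicitly use the oblivious-to-$d_j$ nature of the naive algorithm (as in the $\Omega(\mu)$ observation from the Discussion section), since otherwise the algorithm could compensate by measuring real finished loads retroactively.
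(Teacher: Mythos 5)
Your overall strategy---combine the nested restricted-assignment gadget with a duration-distortion time hierarchy, snapshot at a critical time, and compare a concentrated online load of $\Omega(\mu\log\tilde{D}\log m)$ against an optimum that spreads the same jobs with norm $O((\mu\log\tilde{D}\log m)^{1/p})$---matches the paper's in spirit, but the mechanism by which you generate the factor $\mu\log\tilde{D}$ has a genuine gap. The paper does not use ``$\mu$ copies of each job'' together with predicted durations $2^r$: it reuses the Price-of-Estimation time points $t_{j+1}=t_j+\frac{D-t_j}{\mu}$, releasing one nested $\log m$-gadget per wave with predicted duration exactly $t_{j+1}-t_j$, so that there are $j^*=\Theta(\mu\log\tilde{D})$ waves and, crucially, at the start of each wave the naive algorithm believes every previously assigned job has already departed (its predicted duration has expired), so its believed loads are back to zero and it restacks. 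This is precisely why no potential-function induction is needed; the ``main obstacle'' you flag (an inductive invariant tracking true and predicted potentials on nested machine subsets) simply does not arise in the paper's argument. By contrast, $\mu$ identical copies released simultaneously with the same prediction will be spread, not stacked, by the greedy exponential-potential algorithm, so your Step 2 claim that the algorithm is ``forced onto a diagonal'' concentrating $\Omega(\mu\log\tilde{D}\log m)$ jobs on one machine is not established by your construction.

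Second, you leave unspecified which jobs have their true durations extended, and this is where the oblivious-to-$d_j$ property must be used up front rather than as an afterthought. If all released jobs stay alive until the snapshot $t^*$, then OPT itself accumulates load $\Omega(\mu\log\tilde{D})$ per machine (each wave confines $\Theta(m)$ of its jobs to the first half of the machines), and the ratio collapses to $O(\log m)$ for $\ell_\infty$, far below the target. In the paper the adversary, exploiting that the naive algorithm never reads real durations, adaptively extends only the $\log m$ jobs per wave that the algorithm stacked (to real duration $\mu$ times the prediction, i.e.\ until time $D$), and lets every other job depart on schedule; only then can OPT place the $j^*\log m$ long-lived jobs essentially one per machine and keep $\max_t\|\ell^*(t)\|_p=O((j^*\log m)^{1/p})$, which is what yields the exponent $1-\frac{1}{p}$. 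Without these two fixes---staggered waves whose predictions expire before the next wave arrives, and adaptive extension of only the stacked jobs---your construction does not give the claimed lower bound.
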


\begin{proof}
    The counterexample presented here is a generalization of the proof of the lower bound in Section \ref{sec:price}. We construct a jobs sequence $\sigma$ using the same idea with the arrival times $t_j$ as defined in (\hyperref[times-series-def]{Time points determination}). However, in this modified series, instead of only one job arriving at time $t_j$ and departing at time $t_{j+1}$ as described in the proof of the lower bound in Section \ref{sec:price}, there will be $m-1$ jobs arriving at time $t_j$.
    The sequence is defined as follows:
    \begin{algorithm}[H]
    \captionsetup{type=algorithm}
    \caption*{A counter example jobs sequence $\sigma$}
    \begin{algorithmic}
        \For{$j \gets 1$ to $j^{*}$}
            \State By time $t_j$ we release the following $m-1$ jobs, all with $\tilde{d_j} = t_{j+1} - t_j$:

            \For{$k \gets 1$ to $\log m$}
                \For{$z \gets 1$ to $\frac{m}{2^k}$}
                \State Release job $j' = (j^{*}, k, z)$ with $p_{z,j'} = p_{z+\frac{m}{2^k},j'} = 1$ and $p_{i, j'} = \infty$ for all $i \neq z, z+\frac{m}{2^k}$ 
                \State Job $j'$ can be assigned only to machines $z$ and $z+\frac{m}{2^k}$ 
                \State WLOG the algorithm assigns the job j on machine $z$, otherwise we rename these two machines.
                \EndFor
            \EndFor
        \EndFor
    \end{algorithmic}
    \end{algorithm}

    \noindent We first analyze $OPT$'s objective. $OPT$ would assign each job $j' = (j^{*}, k, z)$ on machine $z+\frac{m}{2^k}$, yielding a load of $1$ on $m-1$ machines during each iteration $j$.
    Hence, $OPT(\sigma) = (j^{*} \cdot \log m)^{\frac{1}{p}} = \bigO\left((\mu \log \tilde{D} \log m)^{\frac{1}{p}} \right)$.
    On the other hand, the online algorithm adds by the end of each iteration $j$ a load of $\log m$ on the first machine.
    Hence, in total by time $\tilde{D} - \epsilon$ the load on that machine is $j^{*} \log m$ and hence $ON(\sigma) = j^{*} \log m = \mu \log \tilde{D} \log m$.

    \noindent Therefore we have,
    \begin{align*}
        \frac{ON(\sigma)}{OPT(\sigma)}
        \geq \frac{\mu \log \tilde{D} \log m}{\bigO\left((\mu \log \tilde{D} \log m)^{\frac{1}{p}} \right)}
        \geq \Omega\left((\mu \log \tilde{D} \log m)^{1-\frac{1}{p}} \right)
    \end{align*}
    For the special case, in the $\ell_{\infty}$-norm the algorithm reaches the bound of $\Omega\left(\mu \log \tilde{D} \log m \right)$.
\end{proof}

\section{A Lower Bound for Online Algorithms Using Only the Estimations}\label{pseudo-lower-bound}

\begin{lemma}
    Any deterministic online algorithm which is oblivious to the actual durations and only based on the estimations is at least $\Omega(\mu)$ for the maximum load objective.
\end{lemma}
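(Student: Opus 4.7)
The plan is to exploit the defining weakness of an oblivious algorithm: its placement decisions depend only on the sequence of predictions and restrictions, not on the actual durations, which the adversary is still free to choose from the allowed range after observing the (deterministic) placement. I will work with $\mu_1=\mu$ and $\mu_2=1$, so that every job with $\tilde{d}_j=1$ may have actual duration anywhere in $[1,\mu]$.

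The construction releases $\mu$ jobs at consecutive times $t_j=j-1$ for $j=1,\dots,\mu$, each with $\tilde{d}_j=1$. By the paper's own time-slot convention a job with $\tilde{d}_j=1$ is pseudo-alive in exactly one slot, so from the algorithm's point of view every previously released job has departed when the next one arrives; the estimated loads $\tilde{\ell}_i$ that drive its decision are identically zero at each arrival. The first step is to argue that for any deterministic oblivious rule the restrictions can be chosen so that one machine $i^{*}$ collects $\Omega(\mu)$ of the placements. In the simplest case the jobs are fully flexible, the pseudo-loads are all equal at each arrival, and the deterministic tie-break therefore returns the same $i^{*}$ every time. For rules whose tie-breaking is more elaborate (say, a rotation derived from past estimations), the adversary simulates the algorithm in advance and instead presents restrictions of the form $R_j=\{i^{*},i_j\}$ with a fresh decoy $i_j$; the decoys are tuned so that the algorithm's deterministic response lands on $i^{*}$ in a constant fraction of the rounds, or else the algorithm is pushed into using very thin restrictions $R_j$ and forced to stack many jobs within a single round, which already yields an instantaneous $\Omega(\mu)$ load.

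Once the target machine $i^{*}$ is fixed, the adversary sets $d_j=\mu$ for every job placed on $i^{*}$ (admissible since $\mu=\mu_1\tilde{d}_j$) and $d_j=1$ for all the rest. The earliest long job, released at time $0$, is alive until time $\mu$, so at time $\mu-1$ all $\Omega(\mu)$ long jobs placed on $i^{*}$ are simultaneously active and the online load on $i^{*}$ is $\Omega(\mu)$. The offline optimum, knowing the real durations in advance, scatters the $\Omega(\mu)$ long jobs onto that many distinct machines (feasible in the regime $\mu\leq m$ assumed by the main theorem), producing maximum load $O(1)$; the short jobs can raise this by at most a constant when they briefly coincide with a long one. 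The ratio is therefore $\Omega(\mu)$.

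The main obstacle is the concentration claim in the first step: proving that for every deterministic oblivious rule there is an adaptive restriction sequence making one machine absorb a constant fraction of the $\mu$ placements. The intended argument is a two-case pigeonhole parameterized by the sizes $|R_j|$ of the restrictions the algorithm is comfortable with — large $|R_j|$ allows the adversary to pin the tie-break onto a single $i^{*}$, while small $|R_j|$ forces within-round stacking — but packaging this trade-off against arbitrary deterministic state carried purely through the estimations is the only delicate point of the proof.
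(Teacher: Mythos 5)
Your core insight is the right one -- since the algorithm ignores actual durations, the adversary may fix the $d_j$'s \emph{after} observing the whole run, lengthening exactly the jobs that were co-located -- but the construction you build on it has a genuine gap, and it is exactly the step you flag as ``delicate.'' With only $\mu$ jobs released one per round, there is no way to force concentration against an \emph{arbitrary} deterministic oblivious algorithm: such an algorithm is not restricted to being a tie-break over pseudo-loads, it may be any deterministic function of the estimation history. For instance, ``assign the $j$-th arriving job to machine $j$'' (round-robin) is deterministic and duration-oblivious, and on your instance it yields online maximum load $1$ no matter how the durations are later chosen, so the ratio is $O(1)$. Your fallback with restrictions $R_j=\{i^{*},i_j\}$ fails for the same reason: the algorithm can always choose the fresh decoy $i_j$, never touching $i^{*}$, and again every machine ends with at most one job; note also that the sizes $|R_j|$ are chosen by the adversary, not the algorithm, so there is no ``thin restrictions'' branch for the algorithm to be pushed into. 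So the concentration claim is not merely unproven -- it is false for the instance you propose.

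The paper closes this hole by brute-force pigeonhole rather than by steering the algorithm: it uses $\mu$ machines and releases up to $\mu$ fully flexible unit-prediction jobs ($\tilde d_j=1$, $p_{i,j}=1$ for all $i$) at \emph{each} time $t=0,\dots,\mu-1$, i.e.\ up to $\mu^{2}$ jobs, stopping as soon as some machine has received $\mu$ jobs -- which must happen, whatever the deterministic rule does. Only then are the durations fixed: $d_j=\mu$ for the $\mu$ jobs on the overloaded machine (all of them are then simultaneously alive, since every arrival time is at most $\mu-1$), and $d_j=1$ for the rest, giving $ON(\sigma)\ge\mu$ while the distortion is still $\mu$. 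The offline optimum spreads the at most $\mu$ long jobs one per machine and, in each time slot, the at most $\mu$ freshly arrived short jobs one per machine, so $OPT(\sigma)\le 2$ and the ratio is at least $\mu/2$. If you want to salvage your write-up, replace the single-job-per-round sequence and the decoy argument by this many-jobs-per-round pigeonhole; the deferred-duration part of your argument then goes through unchanged.
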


\begin{proof}
Let $ON$ be an online algorithm which using the pseudo loads of jobs. We construct a jobs sequence $\sigma$, with distortion of $\mu$, over $\mu$ machines such that the competitive ratio of $ON$ is at least $\mu/2$.

We start releasing jobs, so finally $\sigma$ contains (at most) $\mu^{2}$ jobs.
By time $t$, $0 \leq t \leq \mu - 1$, $\mu$ jobs are released. All jobs have $\Tilde{d_{j}}=1$ and $p_{i,j} = 1$ for each machine $1 \leq i \leq \mu$.
Note that, even if along the time interval $[0, \mu-1]$ some of these jobs were already departed, since $ON$ using the pseudo loads, $ON$ can not taking it into account. Hence, by time $\mu$, there exist (at most) $\mu^{2}$ active jobs over $\mu$ machines, meaning that $ON$ must assign at least $\mu$ jobs on one of the machines.
When that conditions occurs, we stop the jobs sequence and we do not release any more jobs.

We can define the real durations of each of jobs as we wish (with limitation of $1 \leq d_j \leq \mu$), since either way $ON$  will act the same on the jobs sequence.
Therefore, if we choose to $d_j = \mu$ for all the jobs which $ON$ assigned to the same machine, and $d_j = 1$ otherwise, we have $ON(\sigma) = \mu$. It is also obvious that the optimal offline algorithm will assign each of these jobs to other machine, and in total $OPT(\sigma) \leq 2$, yielding the competitive ratio of $ON$ is at least $\mu / 2$.
\end{proof}

\section{An Algorithm for the \texorpdfstring{$\ell_{\infty}$-norm}{}}

Recall that $\ell_{\infty}$-norm is equivalent up to a constant factor to the $\ell_{\log m}$-norm. Hence, the results for $\ell_{\infty}$-norm can be derived from the the Load Balancing Algorithm (\ref{lp_norm_algorithm}) using $p=\log m$. In this Section, we present an alternative algorithm for this particular case. Our approach builds upon the best-known result for the problem of temporary tasks with exact durations, as outlined in \cite{AKPPW97}.

The algorithm for the $\ell_{\infty}$-norm shares the same assumptions as the algorithm for the general $\ell_p$-norm (\ref{all-assumptions}), but it also includes an additional assumption (\ref{opt_assumption}) that is crucial for its execution. For convenience, let us restate all the assumptions below:

\begin{enumerate}
    \item The parameters $\mu_{1}$ and $\tilde{D}$ are known to the algorithm in advance.
    \item All jobs arrive in the time range 0 up to $\mu_1 \tilde{D}$.
    \item \label{opt_assumption}An upper bound $OPT(\sigma) \leq \Lambda \leq 2OPT(\sigma)$ is known to the algorithm in advance.
\end{enumerate}

As in the general $\ell_p$-norm case, we give an algorithm which takes into account these assumptions. The removal of the first and second  assumptions is done exactly as explained in Subsection \ref{remove-assumptions}. In order to remove assumption (\ref{opt_assumption}), we use the generic ''doubling technique", i.e. initiating the optimal value by the first job load vector size, and then we calculate at each time the current optimal value so far, and if it grows twice more than the  optimal value, we run a new copy of the algorithm.

\paragraph*{The algorithm}\label{linf_norm_algorithm} Firstly we define for every $x$, $\bar{x} \triangleq \frac{x}{\Lambda}$. Additionally, we introduce the variable $a$ defined as $a \triangleq 1 + \frac{1}{2\mu}$. Upon arrival of a new job $j$ the algorithm allocates the job to the machine $i$ such that:
\begin{align}
    i =  \arg \min_{z} \sum_{t=t_j+1}^{t_j+\lfloor \mu_{1} \cdot \Tilde{d_j} \rfloor} \left[a^{\bar{\Tilde{\ell}}_{z,j-1}(t) + \bar{p}_{z,j}} - a^{\bar{\Tilde{\ell}}_{z,j-1}(t)}\right] \nonumber
\end{align}

\setcounter{theorem}{1}
\begin{theorempart}[Restated partly from Theorem \ref{thm:upper}]
    For the case of $\ell_\infty$-norm, the algorithm above is $\bigO\left(\mu (\log m +\log (\mu \tilde{D}))\right)$-competitive.
\end{theorempart}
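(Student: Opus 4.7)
The plan is to adapt the potential‑function argument used for the general $\ell_p$ case in Subsection \ref{sec:lp_norm_upper} to the exponential potential
\[
\Phi_j \;=\; \sum_{t=1}^{\tilde T}\sum_{i=1}^m a^{\bar{\tilde\ell}_{i,j}(t)},
\qquad a=1+\tfrac{1}{2\mu},
\]
and mimic, line by line, the chain of inequalities \eqref{ineq11}--\eqref{ineq14}. First I would express $\Phi_j-\Phi_{j-1}$ as a sum over the $\lfloor\mu_1\tilde d_j\rfloor$ time slots in which job $j$ contributes to the pseudo load, then apply the greedy selection rule to replace $y_{i,j}$ by $y^*_{i,j}$, then use the monotonicity $\tilde\ell_{i,j-1}(t)\ge\tilde\ell_{i,j-1}(t+1)$ to shrink the time interval from $\lfloor\mu_1\tilde d_j\rfloor$ to $\lceil\mu_1\tilde d_j/\mu\rceil\le d_j$ at the cost of a multiplicative factor $\mu$, and finally use the fact that $x\mapsto a^{x+\bar p}-a^x=a^x(a^{\bar p}-1)$ is non‑decreasing in $x$ to replace $\bar{\tilde\ell}_{i,j-1}(t)$ by $\bar{\tilde\ell}_i(t)+\bar\ell^*_{i,j-1}(t)$.

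Telescoping over $j$ (the sum of $\bar\ell^*_{i,j}-\bar\ell^*_{i,j-1}$ telescopes time‑slot by time‑slot into $\bar\ell^*_i(t)$), and using $\Phi_0=m\tilde T$, yields
\[
\Phi \;-\; m\tilde T \;\le\; \mu\sum_{t=1}^{\tilde T}\sum_i a^{\bar{\tilde\ell}_i(t)}\bigl(a^{\bar\ell^*_i(t)}-1\bigr),
\]
where $\Phi=\Phi_n$. Since $\ell^*_i(t)\le OPT\le\Lambda$, we have $\bar\ell^*_i(t)\in[0,1]$, and by convexity of $x\mapsto a^x$ the chord bound gives $a^{\bar\ell^*_i(t)}-1\le\bar\ell^*_i(t)(a-1)\le a-1=\tfrac{1}{2\mu}$. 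Plugging this in collapses the right–hand side to at most $\tfrac12\Phi$, yielding $\Phi\le 2m\tilde T$.

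From here a standard ``max $\le$ sum'' step finishes: for every $i$ and $t$,
\[
a^{\bar{\tilde\ell}_i(t)}\;\le\;\Phi\;\le\;2m\tilde T,
\]
so $\bar{\tilde\ell}_i(t)\le\log_a(2m\tilde T)=O(\mu\log(m\tilde T))$ because $\log a=\Theta(1/\mu)$. Recalling $\tilde T=2\mu_1\tilde D$, $\Lambda\le 2\,OPT$, and $\ell_i(t)\le\tilde\ell_i(t)$, we conclude
\[
\max_{t,i}\ell_i(t)\;\le\;\Lambda\cdot O(\mu\log(m\mu\tilde D))\;=\;O\bigl(\mu(\log m+\log(\mu\tilde D))\bigr)\cdot OPT(\sigma),
\]
which is the claimed competitive ratio.

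I expect the main obstacle to be the clean handling of the monotonicity step \eqref{ineq14} in the exponential setting: one must argue that replacing $\bar{\tilde\ell}_{i,j-1}(t)$ by the strictly larger quantity $\bar{\tilde\ell}_i(t)+\bar\ell^*_{i,j-1}(t)$ only increases the differences $a^{x+\bar p}-a^x$, summed over machines, and simultaneously verify that $\bar\ell^*_{i,j-1}(t)$ is bounded by $1$ so that the chord bound on $a^x-1$ applies. The removal of the three assumptions (known $\mu_1,\tilde D$; jobs arriving in a bounded window; and a $2$‑approximation $\Lambda$ of $OPT$) is by the doubling technique already developed in Subsection \ref{remove-assumptions}, augmented with the standard doubling of $\Lambda$ whenever the current optimum (which can be tracked online) exceeds the current guess; each of these adds only a constant factor and preserves the competitive ratio.
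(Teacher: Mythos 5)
Your proposal is correct and follows essentially the same potential-function argument as the paper's appendix proof: greedy exchange against $OPT$, monotonicity of the snapshot pseudo-loads to compress the time window by a factor $\mu$, a telescoping sum that bounds $\sum_{i,t}a^{\bar{\tilde\ell}_i(t)}$ by $O(m\tilde T)$, and then a max-vs-sum plus $\log_a$ step, with the $\Lambda$ guess removed by doubling. The only cosmetic difference is that you carry $\bar\ell^*_{i,j-1}(t)$ into the exponent and telescope $a^{(\cdot)}$ before applying the chord bound $a^x-1\le(a-1)x$, whereas the paper linearizes via the chord bound first and then telescopes the linear OPT loads; both orderings yield the same $\Phi\le 4m\mu_1\tilde D$ bound.
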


\begin{proof}

For the purpose of our proof, we use the following Lemma:
\begin{lemma}\label{a^x lemma} \cite{AAFPW97}
    For every $a > 1$ and $0 \leq x \leq 1$ it holds that $a^{x} \leq (a-1)x$. 
\end{lemma}

We have for each job $j$:
{\allowdisplaybreaks 
\begin{align} \sum_{t=t_j+1}^{t_j+\lfloor \mu_{1} \cdot \Tilde{d_{j}} \rfloor} &\left[a^{\bar{\Tilde{\ell}}_{i,j-1}(t) + \bar{p}_{i,j}} - a^{\bar{\Tilde{\ell}}_{i,j-1}(t)}\right]
    \leq \sum_{t=t_j+1}^{t_j+\lfloor \mu_{1} \cdot \Tilde{d_{j}} \rfloor} \left[a^{\bar{\Tilde{\ell}}_{i^{*},j-1}(t) + \bar{p}_{i^{*},j}} - a^{\bar{\Tilde{\ell}}_{i^{*},j-1}(t)}\right] \nonumber \\
    &= \sum_{t=t_j+1}^{t_j+\lfloor \mu_{1} \cdot \Tilde{d_{j}} \rfloor} a^{\bar{\Tilde{\ell}}_{i^{*},j-1}(t)} \cdot ( a^{\bar{p}_{i^{*},j}} - 1)\nonumber \\
    &\leq \mu \cdot \sum_{t=t_j+1}^{t_j+\lceil \frac{\mu_1 \cdot \tilde{d_j}}{\mu} \rceil } a^{\bar{\Tilde{\ell}}_{i^{*},j-1}(t)} \cdot ( a^{\bar{p}_{i^{*},j}} - 1)\label{ineq41} \\
    &\leq \mu \cdot \sum_{t=t_j+1}^{t_j+d_j} a^{\bar{\Tilde{\ell}}_{i^{*},j-1}(t)} \cdot ( a^{\bar{p}_{i^{*},j}} - 1)\label{ineq42} \\
    &\leq \mu \cdot \sum_{t=t_j+1}^{t_j+d_j} a^{\bar{\Tilde{\ell}}_{i^{*}}(t)} \cdot ( a -1) \cdot \bar{p}_{i^{*},j} \label{ineq43} \\
    &= \mu(a-1) \cdot \sum_{t=t_j+1}^{t_j+d_j} a^{\bar{\Tilde{\ell}}_{i^{*}}(t)} \cdot \bar{p}_{i^{*},j}
    = \frac{1}{2} \cdot \sum_{t=t_j+1}^{t_j+d_j} a^{\bar{\Tilde{\ell}}_{i^{*}}(t)} \cdot \bar{p}_{i^{*},j} \nonumber
\end{align}}

Inequality \eqref{ineq41} follows since for any $i^{*},j$, $\tilde{\ell}_{i^{*},j}(t) \geq  \tilde{\ell}_{i^{*},j}(t+1)$. Therefore $\mu$ times the sum of the first $\lceil \frac{\lfloor \mu_1 \cdot\tilde{d}_j \rfloor}{\mu} \rceil$ terms is at least as the sum on the $\lfloor \mu_1 \cdot\tilde{d}_j \rfloor$ terms. To complete, we use that $\frac{\lfloor \mu_1 \cdot\tilde{d}_j \rfloor}{\mu} \leq \frac{\mu_1 \cdot\tilde{d}_j}{\mu}$.
Inequality \eqref{ineq42} follows since it holds $\lceil \frac{\mu_1 \cdot \tilde{d_j}}{\mu} \rceil = \lceil \frac{\tilde{d_j}}{\mu_{2}} \rceil \leq  d_{j}$.
Inequality \eqref{ineq43} follows by using Lemma \ref{a^x lemma} and the inequality $\bar{\Tilde{\ell}}_{i^{*},j-1}(t) \leq \bar{\Tilde{\ell}}_{i^{*}}(t)$.

Summing the inequalities for each $j$, we have on the $LHS$:
\begin{align}
    \sum_{j} &\sum_{t=t_j+1}^{t_j+\lfloor \mu_1 \cdot\tilde{d}_j \rfloor} \left[a^{\bar{\Tilde{\ell}}_{i,j-1}(t) + \bar{p}_{i,j}} - a^{\bar{\Tilde{\ell}}_{i,j-1}(t)}\right] \nonumber \\
    &= \sum_{i=1}^{m} \sum_{t=1}^{2\mu_1 \tilde{D}} \sum_{j | t \in [t_j+1, t_j + \lfloor \mu_1 \cdot\tilde{d}_j \rfloor]}  \left[a^{\bar{\Tilde{\ell}}_{i,j-1}(t) + \bar{p}_{i,j}\cdot y_{i,j}} - a^{\bar{\Tilde{\ell}}_{i,j-1}(t)}\right] \nonumber
    \\ &= \sum_{i=1}^{m} \sum_{t=1}^{2\mu_1 \tilde{D}} \left[a^{\bar{\Tilde{\ell}}_{i}(t)} - a^{0}\right] \label{ineq51}
    = \sum_{i=1}^{m} \sum_{t=1}^{2\mu_1 \tilde{D}} a^{\bar{\Tilde{\ell}}_{i}(t)} - 2m\mu \tilde{D}
\end{align}

Summing the inequalities for each $j$, we have on the $RHS$:
\begin{align}
    \frac{1}{2} \sum_{j} \sum_{t=t_j+1}^{t_j+d_j} a^{\bar{\Tilde{\ell}}_{i^{*}}(t)} \cdot \bar{p}_{i^{*},j} \nonumber
    &= \frac{1}{2} \sum_{i=1}^{m} \sum_{t=1}^{2\mu_1 \tilde{D}} \sum_{j | t \in [t_j+1, t_j + d_j]} a^{\bar{\Tilde{\ell}}_{i}(t)} \cdot \bar{p}_{i,j}\cdot y^{*}_{i,j} \nonumber
    \\ &= \frac{1}{2} \sum_{i=1}^{m} \sum_{t=1}^{2\mu_1 \tilde{D}} a^{\bar{\Tilde{\ell}}_{i}(t)} \cdot \sum_{j | t \in [t_j+1, t_j + d_j]} \bar{p}_{i,j}\cdot y^{*}_{i,j}
    \leq \frac{1}{2} \sum_{i=1}^{m} \sum_{t=1}^{2\mu_1 \tilde{D}} a^{\bar{\Tilde{\ell}}_{i}(t)} \label{ineq52}
\end{align}

Equality \eqref{ineq51} follows since for every $i, t$ the inner sum is telescope.
Inequality \eqref{ineq52} follows since for every machine $i$ the load of the optimal solution by time $t$ is $\ell^{*}_{i}(t) = \sum_{j | t \in [t_j+1, t_j + d_j]} p_{i,j}\cdot y^{*}_{i,j}$. By assumption \ref{opt_assumption} it holds that $\ell^{*}_{i}(t) = \sum_{j | t \in [t_j+1, t_j + d_j]} p_{i,j}\cdot y^{*}_{i,j} \leq \Lambda$. Dividing both sides by $\Lambda$ yields $\sum_{j | t \in [t_j+1, t_j + d_j]} \bar{p}_{i,j}\cdot y^{*}_{i,j} \leq 1$.

\noindent By combining the $LHS$ and the $RHS$ and exchanging the sides, we obtain:
\begin{align*}
    \sum_{i=1}^{m} \sum_{t=1}^{2\mu_1 \tilde{D}} a^{\bar{\Tilde{\ell}}_{i}(t)} \leq 4m\mu T 
\end{align*}
Thus, for every machine $i$ and time $t$ it holds:
\begin{align*}
    a^{\bar{\ell}_{i}(t)}
    \leq a^{\bar{\Tilde{\ell}}_{i}(t)}
    \leq \sum_{i=1}^{m} \sum_{t=1}^{2\mu_1 \tilde{D}} a^{\bar{\Tilde{\ell}}_{i}(t)}
    \leq 4m\mu_1 \tilde{D}
\end{align*}
By taking $\log_2$ we get:
\begin{align*}
    \bar{\ell}_{i}(t) 
    &\leq \frac{\log_{2}(4m\mu_1 \tilde{D})}{\log_{2}(a)} 
    = \frac{\log_{2}(4m\mu_1 \tilde{D})}{\log_{2}(1 + \frac{1}{2\mu})}
    \leq \frac{\log_{2}(4m\mu_1 \tilde{D})}{\frac{1}{2\mu}} \\
    &= 2\mu \log_{2}(4m\mu_1 \tilde{D}) = \bigO(\mu(\log m + \log (\mu \tilde{D})))
\end{align*}
Where we used that the function $f(x)=\log_{2}(1+x)$ is concave and it holds $f(0)=g(0)$, $f(1)=g(1)$ for the linear function $g(x)=x$, thus $f(x) \geq g(x)$ for every $0 \leq x \leq 1$, specifically for $x=\frac{1}{2\lambda}$.

\end{proof}

\section{An Algorithm for the Routing Problem}\label{appendix-routing}
In this Section, we design an online algorithm for the routing problem of temporary tasks with predictions in the $\ell_p$-norm, such that for every input series $\sigma$ with distortion $\mu$ it holds:
\[\max_{t}\|\ell(t)\|_p \leq \bigO\left(\mu (p+ \log (\mu \tilde{D}))\right) \cdot \max_{t}\|\ell^*(t)\|_p, \]

The algorithm and the proof share similarities with those used for the load balancing problem (in Subsection \ref{sec:lp_norm_upper}), with one notable distinction: in the routing problem, each job can impact the load on multiple edges rather than just one machine. To accommodate this difference, we have modified the notations (in Section \ref{sec:preliminaries}), adjusted the algorithm, and refined the proof to better suit the requirements of the routing problem.
We use the exact same assumptions which were used in Subsection \ref{sec:lp_norm_upper}.
The assumptions are the following:
\begin{enumerate}
    \item The parameters $\mu_{1}$ and $\tilde{D}$ are known to the algorithm in advance.
    \item All jobs arrive in the time range 0 up to $\mu_1 \tilde{D}$.
\end{enumerate}

\noindent The removal of the assumptions is done exactly as explained in Subsection \ref{remove-assumptions}.

\paragraph*{The algorithm}\label{lp_norm_algorithm-routing} Let $\Tilde{T} \triangleq 2\mu_1 \tilde{D}$. Upon arrival of a new job $j=(s_j, t_j, \{p_{e,j}\}_{e \in E},\tilde{d_j})$ the algorithm allocates the job to a route $W_j$ from $s_j$ to $t_j$ such that:
\begin{align}
    W_j =  \arg \min_{W'} \sum_{t=a_j + 1}^{a_j+\lfloor \mu_{1} \cdot \Tilde{d_{j}} \rfloor} \left[(\sum_{e \in E} (\Tilde{\ell}_{e,j-1}(t) + p_{e,j}\cdot \mathds{1}_{e \in W'})^{p})^{\frac{p+\log \Tilde{T}}{p}} - (\sum_{e \in E} \Tilde{\ell}_{e,j-1}(t)^{p})^{\frac{p+\log \Tilde{T}}{p}}\right] \nonumber
\end{align}
Out of all possible routes $W'$ from $s_j$ to $t_j$ in the graph.

We prove that the algorithm is $\bigO\left(\mu (p+\log (\mu \tilde{D}))\right)$-competitive for any $1 \leq p \leq \log m$.

\begin{proof}
    
For a given $j$, let $\Tilde{L}_{j-1} =  (\sum_{e \in E} \Tilde{\ell}_{e,j-1}(t)^{p})^{\frac{p+\log \Tilde{T}}{p}}$.
We have for each $j$:
{\allowdisplaybreaks
\begin{align}
    &\sum_{t=1}^{\Tilde{T}} (\sum_{e \in E} \Tilde{\ell}_{e,j}(t)^{p})^{\frac{p+\log \Tilde{T}}{p}} - \sum_{t=1}^{\Tilde{T}} (\sum_{e \in E} \Tilde{\ell}_{e,j-1}(t)^{p})^{\frac{p+\log \Tilde{T}}{p}} \nonumber\\
     &= \sum_{t=a_j + 1}^{a_j+\lfloor \mu_{1} \cdot \Tilde{d_{j}} \rfloor} (\sum_{e \in E} \Tilde{\ell}_{e,j}(t)^{p})^{\frac{p+\log \Tilde{T}}{p}} - \Tilde{L}_{j-1} \nonumber\\
    &= \sum_{t=a_j + 1}^{a_j+\lfloor \mu_{1} \cdot \Tilde{d_{j}} \rfloor} \left[(\sum_{e \in E} (\Tilde{\ell}_{e,j-1}(t) + p_{e,j}\cdot y_{e,j})^{p})^{\frac{p+\log \Tilde{T}}{p}} - \Tilde{L}_{j-1}\right] \label{routing-ineq81}\\
    &\leq \sum_{t=a_j + 1}^{a_j+\lfloor \mu_{1} \cdot \Tilde{d_{j}} \rfloor} \left[(\sum_{e \in E} (\Tilde{\ell}_{e,j-1}(t) + p_{e,j}\cdot y^{*}_{e,j})^{p})^{\frac{p+\log \Tilde{T}}{p}} - \Tilde{L}_{j-1}\right] \label{routing-ineq82}\\
    &\leq \mu \cdot \sum_{t=a_j + 1}^{a_j+ \lceil \frac{\mu_1 \cdot \tilde{d_j}}{\mu} \rceil } \left[(\sum_{e \in E} (\Tilde{\ell}_{e,j-1}(t) + p_{e,j}\cdot y^{*}_{e,j})^{p})^{\frac{p+\log \Tilde{T}}{p}} - \Tilde{L}_{j-1}\right] \label{routing-ineq86}\\
    &\leq \mu \cdot \sum_{t=a_j + 1}^{a_j+ d_{j}} \left[(\sum_{e \in E} (\Tilde{\ell}_{e,j-1}(t) + p_{e,j}\cdot y^{*}_{e,j})^{p})^{\frac{p+\log \Tilde{T}}{p}} - \Tilde{L}_{j-1}\right] \label{routing-ineq83}\\
    &\leq \mu \cdot \sum_{t=a_j + 1}^{a_j+ d_{j}} \left[(\sum_{e \in E} (\Tilde{\ell}_{e}(t) + \ell^{*}_{e,j-1}(t) + p_{e,j}\cdot y^{*}_{e,j})^{p})^{\frac{p+\log \Tilde{T}}{p}} - (\sum_{e \in E} (\Tilde{\ell}_{e}(t) + \ell^{*}_{e,j-1}(t))^{p})^{\frac{p+\log \Tilde{T}}{p}}\right] \label{routing-ineq84}\\
    &= \mu \cdot \sum_{t=a_j + 1}^{a_j+ d_{j}} \left[(\sum_{e \in E} (\Tilde{\ell}_{e}(t) + \ell^{*}_{e,j}(t))^{p})^{\frac{p+\log \Tilde{T}}{p}} - (\sum_{e \in E} (\Tilde{\ell}_{e}(t) + \ell^{*}_{e,j-1}(t))^{p})^{\frac{p+\log \Tilde{T}}{p}}\right] \nonumber\\
    &= \mu \cdot \sum_{t=1}^{\Tilde{T}} \left[(\sum_{e \in E} (\Tilde{\ell}_{e}(t) + \ell^{*}_{e,j}(t))^{p})^{\frac{p+\log \Tilde{T}}{p}} - (\sum_{e \in E} (\Tilde{\ell}_{e}(t) + \ell^{*}_{e,j-1}(t))^{p})^{\frac{p+\log \Tilde{T}}{p}}\right] \nonumber
\end{align}}

\noindent Equality \eqref{routing-ineq81} follows by the definition of $y_{e,j}$ in Section \ref{sec:preliminaries}.
Inequality \eqref{routing-ineq82} follows by the definition of the algorithm \ref{lp_norm_algorithm-routing}.
Inequality \eqref{routing-ineq86} follows since for any $e,j$, $\tilde{\ell}_{e,j}(t) \geq  \tilde{\ell}_{e,j}(t+1)$. Therefore $\mu$ times the sum of the first $\lceil \frac{\lfloor \mu_1 \cdot\tilde{d}_j \rfloor}{\mu} \rceil$ terms is at least as the sum on the $\lfloor \mu_1 \cdot\tilde{d}_j \rfloor$ terms. To complete, we use that $\frac{\lfloor \mu_1 \cdot\tilde{d}_j \rfloor}{\mu} \leq \frac{\mu_1 \cdot\tilde{d}_j}{\mu}$.
Inequality \eqref{routing-ineq83} follows since it holds $\lceil \frac{\mu_1 \cdot \tilde{d_j}}{\mu} \rceil = \lceil \frac{\tilde{d_j}}{\mu_{2}} \rceil \leq  d_{j}$.
\\Inequality \eqref{routing-ineq84} follows since the function $f(x_{1}, x_{2}, \ldots, x_{|E|}) = (\sum_{e=1}^{|E|} (x_{e} + a_{e})^{w})^{z} - (\sum_{e=1}^{|E|} x_{e}^{w})^{z}$ is non-decreasing if for all $e \in [|E|]$ when we restrict the domain of $x_{e}$ to $[0, \infty)$, $w \geq 1$, $z \geq 1$ and $a_{e} \geq 0$, and since for every $e$ it holds $\tilde{\ell}_{e,j}(t) \leq \Tilde{\ell}_{e}(t)$, in particular $\Tilde{\ell}_{e,j}(t) \leq \Tilde{\ell}_{e}(t) + \ell^{*}_{e,j-1}(t)$.

\noindent Summing the inequality for all $j$ we get,
\begin{align*} 
\sum_{t=1}^{\Tilde{T}} \|\Tilde{\ell}(t)\|_p^{p+\log \Tilde{T}} &=\sum_{t=1}^{\Tilde{T}} (\sum_{e \in E} \Tilde{\ell}_{e}(t)^{p})^{\frac{p+\log \Tilde{T}}{p}}  \\
&\leq  \mu \cdot \sum_{t=1}^{\Tilde{T}} \left[(\sum_{e \in E} (\Tilde{\ell}_{e}(t) + \ell^{*}_{e}(t))^{p})^{\frac{p+\log \Tilde{T}}{p}} - (\sum_{e \in E} (\Tilde{\ell}_{e}(t))^{p})^{\frac{p+\log \Tilde{T}}{p}}\right] \\
&= \mu \cdot \sum_{t=1}^{\Tilde{T}} \left[\| \tilde{\ell}(t) + \ell^{*}(t)\|_p^{p + \log \Tilde{T}} - \|\Tilde{\ell}(t)\|_p^{p+\log \Tilde{T}}\right]
\end{align*}
Therefore,{\allowdisplaybreaks
\begin{align}
\lefteqn{
     (1+\mu) \cdot \sum_{t=1}^{\Tilde{T}} \|\Tilde{\ell}(t)\|_p^{p+\log \Tilde{T}}
     \leq \mu \cdot \sum_{t=1}^{\Tilde{T}} \| \tilde{\ell}(t) + \ell^{*}(t)\|_p^{p + \log \Tilde{T}} } \nonumber \\ 
     &\leq \mu \cdot \sum_{t=1}^{\Tilde{T}} \left[\|\Tilde{\ell}(t)\|_p + \|\ell^{*}(t)\|_p\right]^{p + \log \Tilde{T}} \label{routing-ineq91}
     \\ &\leq \mu \cdot \sum_{t=1}^{\Tilde{T}} \left[(1+\frac{1}{4\mu(p+\log \Tilde{T})})^{p + \log \Tilde{T}} \|\Tilde{\ell}(t)\|_p^{p + \log \Tilde{T}} + (1+ 4\mu(p+\log \Tilde{T}))^{p+\log \Tilde{T}}\|\ell^{*}(t)\|_p^{p + \log \Tilde{T}}\right] \label{routing-ineq92}
     \\ &\leq \mu \cdot \sum_{t=1}^{\Tilde{T}} \left[(1+\frac{1}{2\mu})\cdot \|\Tilde{\ell}(t)\|_p^{p + \log \Tilde{T}} + (5\mu\cdot(p+\log \Tilde{T}))^{p+\log \Tilde{T}}\cdot \|\ell^{*}(t)\|_p^{p + \log \Tilde{T}}\right] \label{routing-ineq93}
     \\ &= (\mu + \frac{1}{2})\cdot \sum_{t=1}^{\Tilde{T}} \|\Tilde{\ell}(t)\|_p^{p + \log \Tilde{T}} + (5\mu\cdot(p+\log \Tilde{T}))^{p+\log \Tilde{T}}\cdot \mu \cdot \sum_{t=1}^{\Tilde{T}} \|\ell^{*}(t)\|_p^{p + \log \Tilde{T}}\nonumber
\end{align}
}
\noindent Inequality \eqref{routing-ineq91} follows by Minkowski's Inequality \cite{B05}.
Inequality \eqref{routing-ineq92} follows as for every $x, y\geq 0, a>0$, and $w\geq 0$ it holds that $(x + y)^{w} \leq (1+a)^w\cdot x^{w} + \left(1+\frac{1}{a}\right)^{w}\cdot y^{w}$, when $x=\|\Tilde{\ell}(t)\|_p$, $y=\|\ell^{*}(t)\|_p$, $w=p+\log \Tilde{T}$ and $a=\frac{1}{4\mu w}$ for each $1\leq t \leq \Tilde{T}$ separately, as showed in footnote\textsuperscript{\ref{comb-lemma}}.
Inequality \eqref{routing-ineq93} follows since $(1+\frac{1}{4\mu w})^w\leq e^{\frac{1}{4\mu}}\leq 1+\frac{1}{2\mu}$ for $\mu \geq 1$ and $w \geq 1$.

\noindent Thus,
\begin{align*}
    \frac{1}{2} \cdot \sum_{t=1}^{\Tilde{T}} \|\Tilde{\ell}(t)\|_p^{p + \log \Tilde{T}} 
    &\leq (5\mu\cdot(p+\log \Tilde{T}))^{p+\log \Tilde{T}}\cdot \mu \cdot \sum_{t=1}^{\Tilde{T}} \|\ell^{*}(t)\|_p^{p + \log \Tilde{T}}\\
    &\leq (5\mu\cdot(p+\log \Tilde{T}))^{p+\log \Tilde{T}}\cdot \mu \cdot \Tilde{T} \cdot (\max_{t=1}^{\Tilde{T}} \|\ell^*(t)\|_p)^{p+\log \Tilde{T}}
\end{align*}
In total we have:
\begin{align}
    (\max_{t}\|\Tilde{\ell}(t)\|_p)^{p+\log \Tilde{T}}
    &= \max_{t} \|\Tilde{\ell}(t)\|_p^{p+\log \Tilde{T}}
    \leq \sum_{t=1}^{\Tilde{T}} \|\Tilde{\ell}(t)\|_p^{p+\log \Tilde{T}}  \nonumber
    \\ &\leq 2 \cdot (5\mu\cdot(p+\log \Tilde{T}))^{p+\log \Tilde{T}}\cdot \mu \cdot \Tilde{T} \cdot (\max_{t=1}^{\Tilde{T}} \|\ell^*(t)\|_p)^{p+\log \Tilde{T}} \nonumber
\end{align}
Recall that $\Tilde{\ell}(t)$ is overestimate $\ell(t)$ and $\tilde{T} = 2 \mu_1 \tilde{D}$.
By taking a power of $\frac{1}{p+ \log \Tilde{T}}$ and holding that $x^{\frac{1}{\log x}} = 2$, we have:
\begin{align}
    \max_{t=1}^{\tilde{T}} \|\ell(t)\|_p 
    &\leq \max_{t=1}^{\tilde{T}} \|\Tilde{\ell}(t)\|_p 
    \leq 2^{\frac{1}{p+\log \Tilde{T}}} \cdot (5\mu\cdot(p+\log \Tilde{T})) \cdot \mu^{\frac{1}{p+\log \Tilde{T}}} \cdot \Tilde{T}^{\frac{1}{p+\log \Tilde{T}}} \cdot \max_{t=1}^{\tilde{T}} \|\ell^*(t)\|_p \nonumber \\
    &\leq 2 \cdot (5\mu\cdot(p+\log \Tilde{T})) \cdot \mu_1^{\frac{1}{p+\log \Tilde{T}}} \cdot \mu_2^{\frac{1}{p+\log \Tilde{T}}} \cdot 2 \cdot \max_{t=1}^{\tilde{T}} \|\ell^*(t)\|_p \nonumber \\
    &\leq 20 \cdot \mu\cdot(p+\log \Tilde{T}) \cdot \mu_1^{\frac{1}{\log (2\Tilde{\mu_1})}} \cdot \mu_2^{\frac{1}{\log (2\Tilde{D})}} \cdot \max_{t=1}^{\tilde{T}} \|\ell^*(t)\|_p \nonumber \\
    &\leq C \cdot \mu (p+ \log (\mu \tilde{D})) \cdot \max_{t=1}^{\tilde{T}} \|\ell^*(t)\|_p
    = \bigO\left(\mu (p+ \log (\mu \tilde{D}))\right) \cdot \max_{t=1}^{\tilde{T}} \|\ell^*(t)\|_p \nonumber
\end{align}
where the last inequality holds since $\mu_2 \leq \tilde{D}$, and for $C \leq 80$.

\end{proof}

\end{appendices}

\end{document}